\theoremstyle{definition}
\newtheorem{theorem}{Theorem}
\newtheorem{proposition}{Proposition}
\newtheorem{lemma}{Lemma}
\newtheorem{remark}{Remark}
\newtheorem{corollary}{Corollary}
\newtheorem{observation}{Observation}
\newcommand{\bbc}[1]{{\BBcomment{#1}}}
\newcommand{\AThr}{\mathcal{A}_{\mathcal{THR}}}
\newcommand{\A}{\mathcal{A}}
\newcommand{\piSuff}{\pi_{\mathrm{suff}}}
\title{Economic Censorship Games in Fraud Proofs}
\author[a]{Ben Berger}
\author[a]{Edward W. Felten}
\author[a]{Akaki Mamageishvili}
\author[b]{Benny Sudakov}
\affil[a]{Offchain Labs}
\affil[b]{ETH Zurich}
\date{}
\begin{document}

\maketitle

\begin{abstract}
    {
    Optimistic rollups rely on fraud proofs --- interactive protocols executed on Ethereum to resolve conflicting claims about the rollup's state --- to scale Ethereum securely.

    To mitigate against potential censorship of protocol moves, fraud proofs grant participants a significant time window, known as the challenge period, to ensure their  moves are processed on-chain. Major optimistic rollups today set this period at roughly one week, mainly to guard against strong censorship that undermines Ethereum's own crypto-economic security.
    However, other forms of censorship are possible, 
    and their implication on optimistic rollup security is not well understood.
    
    This paper considers economic censorship attacks, where an attacker censors the defender’s transactions by bribing block proposers. 
    At each step, the attacker can either censor the defender --- depleting the defender’s time allowance at the cost of the bribe --- or allow the current transaction through while conserving funds for future censorship.

    We analyze three game-theoretic models of these dynamics and 
    determine the challenge period length required to ensure the defender’s success, as a function of the number of required protocol moves and the players’ available budgets.
    }
\end{abstract}


\section{Introduction}
Optimistic rollups are layer 2 scaling solutions designed to increase the throughput (processed transactions per second) of the Ethereum blockchain while reducing its costs 
{and inheriting its economic security}.
{At least from an economic value perspective, optimistic rollups are without a doubt the leading scaling solution for Ethereum, with the three leading examples Arbitrum, Base and Optimism securing over \$30B as of February 7th, 2025.\footnote{See \url{l2beat.com} for up-to-date numbers.}}

The main idea of an optimistic rollup is to execute incoming transactions outside the Ethereum main chain (layer 1), 
thereby reducing
its
computational load
and
enhancing its scalability.
To inherit Ethereum's security, 
optimistic rollups implement mechanisms to have their updated state (resulting from executing these transactions) posted and confirmed on Ethereum, at some pre-determined rate.\footnote{In the case of Arbitrum, this happens roughly once per hour.}

These mechanisms, usually referred to as \emph{fraud-proofs} or \emph{dispute resolution protocols}, work as follows:
one party makes a claim regarding the updated state of the rollup chain,
by sending a corresponding transaction to a designated smart contract on Ethereum.  The claim is accompanied by a deposit which is locked in escrow. 
This claim is not automatically accepted. Instead, it opens the door for any other party, whether white-listed or not, to challenge its validity by posting a counter claim.

The ensuing challenge protocol is an interactive game where the honest party must make correct moves to advance the state of the game towards resolution. Each such move is a transaction sent by that party to one of the smart contracts that govern the protocol on Ethereum.  Once the game terminates, one of the claims is confirmed, the winning party gets her deposit back and the losing parties lose their deposits.
Under ideal conditions, the honest party is guaranteed to win, ensuring the confirmed state's integrity.

However, this system is not without its vulnerabilities, one of which is the potential for a \emph{censorship attack} at the base layer.
Censorship occurs when a transaction's inclusion in the canonical Ethereum chain is prevented or delayed due to malicious activity.

In the context of the challenge protocol, censorship of honest moves can prevent
the challenge from being resolved in favor of the honest party.
To mitigate this risk, existing dispute resolution protocols provide a significant time window for participating parties to submit their required moves before getting eliminated.

A natural question is how big this time window --- usually referred to as the \emph{challenge period} --- should be.
On the one hand, the challenge period should be set high enough to attain overwhelming confidence that the honest claim gets confirmed.
On the other hand, the challenge period is a lower bound on the confirmation time of the challenge protocol.  
Thus, the lower it can be set without compromising safety, the better.

In practice, the challenge period in all existing dispute resolution protocols is designed with a particular type of censorship in mind, usually referred to as ``strong'' censorship ---
an adversary can strongly censor when it has
enough control over Ethereum's stake to allow for complete jurisdiction over which blocks get accepted as part of the canonical chain
\cite{buterin2020}.

It is widely accepted in the Ethereum community that such an attack cannot occur for more than 7 days without a proper social response, which involves a hard fork of the chain.
For this reason, the major optimistic rollups have set their challenge periods to roughly 7 days.\footnote{See a recent discussion on the topic here:\\ \text{https://ethresear.ch/t/optimistic-rollups-the-challenge-period-and-strong-censorship-attacks/21721}}


However, 
other forms of censorship are possible and may be more practical to accomplish.
To the best of our knowledge, there has not been a scientific line of inquiry aimed to study these and to provide an informed answer to the question above.

\subsection{Our Contributions}

In this paper, we consider \emph{economic censorship attacks}
which exploit the fact that in Ethereum,
block proposers/builders have absolute control over which transactions are included in their block.
In such an attack, the adversary
bribes the proposers to exclude the honest party's moves in the challenge protocol, by offering them a payment higher than the honest party's inclusion tip.

The challenge period serves a dual purpose here: it not only provides time for the honest party to react, but also leverages the economic dynamics of the attack. Since the adversary must continuously censor transactions across nearly all blocks during this period
while the defender can afford to make claims sporadically, 
the defender gains an economic advantage. Over time, this advantage can lead to resource depletion for the attacker, making sustained censorship less viable.

Our goal is to understand these dynamics, the strategies we expect the different actors to use, and the implications of these for optimistic rollup security and design.
In particular, we aim not only to highlight vulnerabilities, but to also suggest efficient and simple defense strategies in the fraud proof mechanism, ensuring that optimistic rollups can maintain integrity and resist sophisticated forms of economic attack.





To this end, we study a stylized model of the interaction between the attacker, the defender and the block proposers, in the form of a multi-round strategic game.
Each round corresponds to a block where the defender wants to submit its next move of the underlying challenge protocol.
We study three variants of this game, which differ according the number of proposers responsible for deciding the current block's contents, and the proposers' assumed behavior.
We take a pessimistic view of the game from the defense perspective. In this view, the attacker can always observe the defender's action, namely, observe the tip offered by the defender to the proposers for inclusion and decide on her response after that. 

The three variants are parameterized at least by the total number of rounds (corresponding to the length of the challenge period), 
the number of those rounds that the defender needs to win (corresponding to the number of honest moves required to win the fraud proof challenge), and the budgets of the attacker and the defender.
The main question we ask is:

\begin{quote}
    How large does the attacker's (defender's) budget have to be with respect to the defender's (attacker's) budget, as a function of the game's parameters, such that it can guarantee victory in the game?
\end{quote}

We provide answers to this question for each of the variants.  En route, we identify optimal and approximately optimal (but simpler) strategies for both players.
%
In what follows we describe our results for each of the variants in more detail.

\paragraph{$\mathcal{G}^1$ Game.}

First, we look into the most basic version of this interaction where there is a single proposer in each round, and the round is won by the party that has placed a higher bid.  Thus the proposer is assumed to be myopic, which is reasonable since in practice, the proposer for each block in Ethereum is chosen randomly from the pool of validators. The round winner gets her budget deducted by an amount equal to her bid, and the defender wins the entire game if and only if it has won the required number of rounds.

This foundational model allows us to gain initial insights into the strategic dynamics at play, serving as a crucial building block for understanding more complex interactions.

We identify the players' optimal strategies, and we 
find a linear inequality between the attacker and defender budgets which holds (doesn't hold) if and only if the attacker (defender) can guarantee victory
--- see Theorem \ref{thm:one_builder}.

\paragraph{$\mathcal{G}^1_k$ Game.}
Next, we investigate an extension of $\mathcal{G}^1$ where a round can be either regular or \emph{special}. In a special round the attacker is required to outbid the defender by a factor  $k>1$ in order to win it.
These special rounds are motivated by the existence of Ethereum proposers that are not rational, but rather follow the default transaction inclusion policy offered by Ethereum's execution clients.
Under this policy, transactions are included in the block by decreasing order of \emph{priority fee per gas}, so long as the block gas limit is respected.\footnote{Gas is Ethereum's universal unit for computational load measurement, and each transaction consumes some amount of gas.  Today, Ethereum blocks cannot contain more than 30M gas worth of transactions.}

To illustrate the implication of this for censorship, let us assume that the honest party's transaction consumes $g$ gas units and is specified with a priority fee per gas $f$ which happens to be the highest one among the current pending transactions. 
An adversary who wishes to censor that transaction would need to spend at least $(G-g)f$ in priority fees, where $G$ is the block gas limit, by submitting at least $G-g$ gas units worth of transactions with priority fee at least $f$.  This is $k=(G-g)/g$ times as much as the honest party would have paid for inclusion.


In this extended game, the economic cost of censorship increases as the number of special rounds increases, potentially deterring attackers further while allowing defenders to employ more sophisticated bidding strategies.  We analyze how this change impacts the defender's strategy, the attacker's strategy, and the overall balance of the game, providing insights into how the existence of such rounds could be integrated into real-world fraud-proof bidding strategies to enhance security against economically driven attacks.

First, we assume that
the exact number and location of the special rounds is known in advance to both players. 
We again identify the players' optimal strategies and a linear relation between their budgets that determines the player who can guarantee victory --- see Theorem \ref{thm:optimal_specials}.

As opposed to $\mathcal{G}^1$, the optimal strategies in this variant are complex and the best bid in each round changes dramatically as the game progresses.  
However, we identify simple strategies for both players whose performance converges to that of the optimal ones in some asymptotic regime of the game's parameters --- see Theorem \ref{optimal_vs_trivials} and Corollary \ref{cor:asymptotic}.  We remark that in practice, the relevant parameters are indeed situated in this regime --- see section \ref{sec:eval} for details.

These results have immediate implications for the randomized version of this game, in which it is assumed that each round is special with some known probability, that the rounds are independent, and that the parties learn if some round is special or regular only when it starts.  This randomized variant is particularly relevant in practice,
as it is estimated that around 2\% of Ethereum's proposers use the default transaction inclusion policy.
We show that playing the simple strategies from the first version is approximately optimal here as well --- see Proposition \ref{prop:probabilistic_variant}.

\paragraph{$\mathcal{G}^m$ Game.}
Finally, we consider a setting where a block's content is determined by multiple proposers, 
and the defender and attacker can interact with any of them.
The defender wins a round if at least one of the proposers includes the defender's transaction in its suggested ``inclusion list''.
A variation of this mechanism named FOCIL \cite{focil} is seriously considered for deployment in one of the upcoming Ethereum hard forks.

In this setting the game becomes more intricate since a proposer's payoff might depend on the behavior of other proposers, and this introduces uncertainty in the inclusion outcome.
This
on its own affects censorship and counter-censorship strategies. 
We examine 
how the competition among proposers and its resulting equilibrium behavior affects the dynamics and outcome of the game.
We obtain almost matching conditions for both players to have a winning strategy. 
In particular, for the attacker, we obtain a lower bound on its budget that is almost the number of proposers times higher than the value from the 
game $\mathcal{G}^1$,
for it to win with high probability --- see Theorems \ref{condition_alice_k} and \ref{condition_daria_k}, and Proposition \ref{condition_daria_k_assymetric}.

We conclude the paper by evaluating our theoretical results under real world parameters (section \ref{sec:eval}), followed by a conclusion where we present ideas for future work (section \ref{sec:conclusion}).

\subsection{Related Literature}

Challenge protocols in the context of optimistic rollups are described in~\cite{arbitrum_classic} and~\cite{bold}. 
Verified computation, that pre-dates fraud-proofs, has been studied in~\cite{short_games} and~\cite{verified_computation}.
~\cite{censorship_resistance} studies a similar economic censorship model to ours in the private-value auction context.

Multiple proposers have received attention by the Ethereum community~\cite{mcp}.
In another live Ethereum Improvement Proposal named FOCIL~\cite{focil}
there is a committee of validators for each of block that propose inclusion lists, in addition to the main block proposer. 
While the actual specification allows the main builder to prioritize the whole block with its own transactions and not include the committee suggested transactions if the block is already full, such a scenario can only happen rarely
due to Ethereum's gas pricing mechanism, where consistently full blocks would lead to an exponentially escalating gas price.
~\cite{aucil} study an auction based mechanism for aggregating inclusion lists from committee members. 

\section{Preliminaries}

We model two-player games between a
defender Daria and 
an
attacker Alice. Daria has a budget $D$ to play the game and Alice has a budget $A$. There are $N$ transactions that Daria needs to publish. The games are played in rounds, and there are $T$ rounds in total.  These correspond to the number of blocks during the delay period, as described in the introduction. {At most one of Daria's $N$ transactions can be included in a single round/block.}

In the following sections, we describe three versions of the game. In the first game, denoted by $\mathcal{G}^1$, a single block builder determines whether to include or censor Daria's transaction based on which of the two players offers the higher bid. 

The second game, denoted $\mathcal{G}^1_k$ for some parameter $k\geq 1$, is similar to $\mathcal{G}^1$ --- the only difference being that some of the rounds in this variant are ``special''. In a special round, Alice needs to out-bid Daria by a factor of $k$ in order to win the round (hence, $\mathcal{G}^1$ and $\mathcal{G}^1_1$ are identical).

In the third game, denoted by $\mathcal{G}^m$, there are $m$ block builders or members of the inclusion committee in each round that participate in the block building process.
{Daria's transaction is included in the block if at least one of the builders or committee members suggests to include it.}

{In all these games, the builders are assumed to be myopic, in the sense that they only care about maximizing their pay-off in the current round.  This is a reasonable assumption since the identity of the block builders can change between rounds in practice.}
{$\mathcal{G}^1$, $\mathcal{G}^1_k$ and $\mathcal{G}^m$ are formally described and analyzed in sections \ref{sec:g_1_game},~\ref{sec:g_1_k_game} and~\ref{sec:g_m_game}, respectively.}


\section{$\mathcal{G}^1$ Game}
\label{sec:g_1_game}



Consider the game $\mathcal{G}^1$ between Alice and Daria, which is conducted in rounds, with the following parameters:

\begin{itemize}
\item $T$ - total number of rounds (assume $T \geq 1$)
\item $N$ - number of rounds that Daria has to win to be victorious (assume $1 \leq N \leq T$)
\item $A$ - Alice’s budget (assume $0 \leq A$) 
\item $D$ - Daria’s budget (assume $0 \leq D$).
\end{itemize}

A round is conducted as follows:

\begin{enumerate}
\item  Daria offers a bid $b$, which is at most her current budget.
\item  Alice observes $b$ and then offers a counter-bid $b'$, which is at most her current budget.
\item 
    Alice wins the round iff $b'\geq b$, and otherwise Daria wins the round.
\item The round winner gets her corresponding budget deducted by an amount equal to her bid, and a new round begins.
\end{enumerate}

The game ends when Daria wins $N$ rounds or Alice wins $T-(N-1)$ rounds (note that only one of these can and must happen), and the corresponding party wins the entire game.

Note 
that this game is a finite two-person game of perfect information in which the players move alternately and in which chance does not affect the decision-making process.  Therefore, by Zermelo’s Theorem, one of the players can force victory.

Our goal here is to understand which of the two players has a winning strategy, as a function of the parameters of the game.  
%
In the following theorem, we fully characterize the outcome of $\mathcal{G}^1$.

\begin{theorem}\label{thm:one_builder}
Alice has a winning strategy in $\mathcal{G}^1$ with parameters $T,N,D,A$ iff
\[
\frac{T-N+1}{N}\cdot D \leq A.
\]
\end{theorem}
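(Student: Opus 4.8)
The plan is to rewrite the winning condition symmetrically and then produce an explicit optimal strategy for each player, each verified by induction on the total number of rounds $T$. Set $M:=T-N+1$, the number of rounds Alice must win; since $1\le N\le T$ we have $M\ge 1$ and $N+M=T+1$, and the inequality $\frac{T-N+1}{N}\,D\le A$ is precisely $\frac{D}{N}\le\frac{A}{M}$. It is useful to read $D/N$ and $A/M$ as each side's budget \emph{per required win}: the claim is that Alice wins exactly when her ratio is at least Daria's. One round changes the parameters in one of two ways. If Daria wins it with bid $b$, then $(T,N,D,A)\mapsto(T-1,\,N-1,\,D-b,\,A)$, which leaves $M$ unchanged; if Alice wins it with bid $b'$, then $(T,N,D,A)\mapsto(T-1,\,N,\,D,\,A-b')$, which decreases $M$ by one. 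In both cases $T$ drops by exactly $1$, so I would induct on $T$, with base case $T=1$ (forcing $N=M=1$): a single round, won by whoever can outbid the other, i.e.\ by Alice iff $D\le A$ --- matching the formula.

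\textbf{Daria's side ($\frac{D}{N}>\frac{A}{M}$).} The strategy to analyze is: Daria bids $D/N$ in every round. If Alice does not match, Daria wins the round and her new ratio is $\frac{D-D/N}{N-1}=\frac{D}{N}$ with Alice's $(A,M)$ unchanged, so the strict inequality survives (and if this was her $N$-th win, she is done); if Alice matches, she pays at least $D/N$, and the inequality $\frac{D}{N}>\frac{A-D/N}{M-1}$ is, after clearing denominators, equivalent to the hypothesis $\frac{D}{N}>\frac{A}{M}$, so the strict invariant again survives into a game with smaller $T$ and the induction hypothesis applies. The only case to exclude is Alice matching on her last needed round $M=1$: there the hypothesis reads $D/N>A$, so Alice cannot afford a bid of $D/N$ at all, and Daria wins every round.

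\textbf{Alice's side ($\frac{D}{N}\le\frac{A}{M}$).} Here Alice responds to Daria's bid $b$ by matching it iff $b\le A-\frac{D(M-1)}{N}$ and conceding the round otherwise (matching is affordable, since the subtracted term is nonnegative). The engine of the argument is the identity
\[
A-\frac{D(M-1)}{N}\;\ge\;D-\frac{A(N-1)}{M}\quad\Longleftrightarrow\quad\frac{A}{M}\ge\frac{D}{N},
\]
a two-line rearrangement after multiplying through and using $N+M-1=T$. Now: if Alice matches, $\frac{A-b}{M-1}\ge\frac{D}{N}$ is immediate from her threshold; if she concedes, then $b>A-\frac{D(M-1)}{N}\ge D-\frac{A(N-1)}{M}$ by the identity, hence $\frac{D-b}{N-1}\le\frac{A}{M}$. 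Either way the (weak) invariant $\frac{D}{N}\le\frac{A}{M}$ passes to the next, smaller game. Two boundary situations need a word: when $N=1$ the invariant gives $A\ge MD\ge D(M-1)+b$ (as $b\le D$), so Alice always lands in the matching branch and Daria never collects her single required win; when $M=1$, matching wins the game outright whenever it is affordable, and conceding can only occur while $N\ge 2$, so the game does not end and the invariant carries the induction forward.

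\textbf{Main obstacle.} The core inductive step is not the difficulty --- it collapses onto the single displayed identity --- the delicate part is the bookkeeping at the degenerate states, and in particular noticing that Daria needs a \emph{strict} ratio advantage whereas Alice wins already at equality. That asymmetry is exactly what pins the threshold at ``$\le$'' rather than ``$<$'', and it is the one spot where one cannot simply invoke symmetry. A minor loose end is monotonicity in the budgets --- if Daria wins for $(D,A)$ she wins for every $(D,A')$ with $A'\le A$ --- which is what lets ``Alice pays at least $D/N$'' feed into the induction hypothesis; it holds trivially because shrinking a budget only shrinks that player's strategy set.
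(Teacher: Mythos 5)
Your proof is correct. The substance on Daria's side is the paper's own argument: she bids $D/N$ every round, and the point is that each round Alice takes from her costs at least $D/N$, so Alice cannot afford $T-N+1$ of them when $\frac{T-N+1}{N}D > A$. Where you differ is in packaging and in Alice's strategy. The paper gives Alice the fixed threshold policy ``match iff Daria's bid is at most $D/N$'' and closes both directions with a one-shot counting/contradiction argument (Daria cannot buy $N$ wins at more than $D/N$ each, and cannot force Alice to spend more than $(T-N+1)\frac{D}{N}\le A$); you instead give Alice the adaptive threshold $A-\frac{D(M-1)}{N}$ and verify both directions by induction on $T$, carrying the invariant $\frac{D}{N}$ vs.\ $\frac{A}{M}$ with $M=T-N+1$ and the rearrangement identity $A-\frac{D(M-1)}{N}\ge D-\frac{A(N-1)}{M}\iff\frac{A}{M}\ge\frac{D}{N}$ as the key lemma. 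The paper's version is shorter and makes the winning strategies maximally simple (both thresholds are the single number $D/N$); your invariant formulation is more mechanical to check, makes explicit the tie-breaking asymmetry that places the boundary case $\frac{T-N+1}{N}D=A$ on Alice's side (which the paper handles only implicitly through its weak/strict inequalities), and its state-by-state recursion is closer in spirit to the machinery the paper later needs anyway for $\mathcal{G}^1_k$ in Theorem \ref{thm:optimal_specials}.
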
 

\begin{proof}
We first prove the (counter-positive of the) "only if" direction: if $T-N+1>\frac{AN}{D}$, then Daria has a winning strategy. Consider the following strategy by Daria: in each round $i$ bid $b_i=\frac{D}{N}$. The cheapest strategy for Alice against this strategy is to let Daria win $N-1$ rounds and win the remaining $T-N+1$ rounds. Winning the round for Alice means spending at least $\frac{D}{N}$, that is, in total Alice spends $(T-N+1)\frac{D}{N}$, which, by the assumption is larger than Alice's budget $A$. Therefore, this is a winning strategy for Daria.   

For the other direction, assume towards contradiction that Daria has a winning strategy but $T-N+1 \leq \frac{AN}{D}$. Then, consider the following strategy by Alice: she lets Daria win every round in which Daria bids more than $\frac{D}{N}$ and does not let her win any round in which Daria bids less or equal that $\frac{D}{N}$. We show that with this strategy it is, in fact, Alice who wins. Daria cannot enforce the win of $N$ rounds herself as she needs to pay strictly more than $N\frac{D}{N}=D$ and she cannot exhaust Alice's budget by making her pay for $T-(N-1)$ rounds since by our assumption $(T-N+1)\frac{D}{N} \leq A$.  We have reached a contradiction and this concludes the proof of the theorem.
We remark that  the "only if" direction can also be proved exactly like the "if" direction by switching the roles of Alice and Daria, and having Alice bid $\frac{A}{T-N+1}$ in every round.
\end{proof}

\section{$\mathcal{G}^1_k$ Game}
\label{sec:g_1_k_game}

Consider the game $\mathcal{G}^1_k$ between Alice and Daria, which extends the game $\mathcal{G}^1$ (as discussed in Section \ref{sec:g_1_game}) in the following way.  First, it is parametrized by $k\geq 1$, called the \emph{special round factor}.  This parameter stays constant throughout the game execution, as opposed to the rest of the parameters.
In addition to the parameters $T,N,D,A$ as introduced in the previous section, the game $\mathcal{G}^1_k$ also has the following parameter:
\begin{itemize}
    \item A binary string $\Pi=(P_1, \ldots, P_{T}) \in \{\mathbb{S},\mathbb{R}\}^{T}$, specifying a special and regular round configuration.  $\mathbb{S}$ and $\mathbb{R}$ denote a special and regular (or non-special) round, respectively.
\end{itemize}
Both players are assumed to know exactly which rounds are special.  In other words, both players know the string $\Pi$. 

The game is conducted in rounds as before.  If a round is regular (or non-special), it is conducted and decided exactly as in $\mathcal{G}^1$. A special round is conducted like a regular round, but the round winner is decided differently.  Here, Alice wins the round iff $b' \geq k\cdot b$, where $b'$ is Alice's bid and $b$ is Daria's bid.

In what follows we analyze $\mathcal{G}^1_k$.

\subsection{Analysis}
We start by introducing some useful notation.
The \emph{state} of the game prior to each round is characterized by the tuple 
$(t,n,\pi,d,a)$,
where $t$ is the number of rounds remaining in the game, $n$ is the number of remaining rounds Daria needs to win, 
$\pi \in \{\mathbb{S},\mathbb{R}\}^t$ indicates the remaining special and regular round configuration,
$d$ is Daria’s remaining budget and $a$ is Alice’s remaining budget (the parameter $k$ is fixed throughout, so we omit it from the description of the state).  The state at the beginning of the game is therefore captured by 
$(T,N,\Pi, D,A)$.

We can describe the state transition of the game as follows.  Given state 
$(t,n,\pi,d,a)$, where $1\leq n \leq t$ and $\pi = (p_1, \ldots, p_t) \in \{\mathbb{S},\mathbb{R}\}^t$: 

\begin{enumerate}
\item Daria offers bid $b\leq d$
\item Alice sees $b$ and offers a counter-bid $b' \leq a$.
\item Let $\piSuff$ be the suffix $ \piSuff= (p_2, \ldots, p_t)$.  If $p_1 = \mathbb{R}$:
    \begin{enumerate}
    \item If $b'\geq b$, go to state 
    $(t-1, n, \piSuff, d, a-b')$.
    \item Otherwise, go to state 
    $(t-1, n-1, \piSuff, d-b, a)$
    \end{enumerate}
\item Else ($p_1 = \mathbb{S}$):
    \begin{enumerate}
    \item If $b'\geq k\cdot b$, go to state $(t-1,n,
    \piSuff,
    d,a-b')$
    \item Otherwise, go to state $(t-1,n-1,
    \piSuff,
    d-b,a)$
    \end{enumerate}
\end{enumerate}

Recall that by Zermelo's theorem, one of the players has a winning strategy given any starting state.  Given the state $(t,n,\pi,d,a)$, we denote by $W(t,n,\pi,d,a) \in \{Alice, Daria\}$ the player that has a winning strategy starting from that state.
Note that if $W(t,n,\pi,d,a) = Alice$, then $W(t,n,\pi,d,a') = Alice$ for any alternative budget $a'>a$ --- Alice can use exactly the same strategy she would have used with the budget $a$.
We also note that for any $t,n,\pi,d$, there is some $a$ such that $W(t,n,\pi,d,a) = Alice$ --- e.g. with budget $a = t\cdot k \cdot d$, Alice wins by bidding $b' = k\cdot d$ in every round until victory.
We thus define, for any $t,n,\pi,d$:
\[
\AThr(t,n,\pi,d) := \min\{a \mid W(t,n,\pi,d,a) = Alice\}
\]

Our goal here is to understand the function $\AThr$, which we refer to as \emph{Alice's winning threshold}.

\begin{theorem}\label{thm:optimal_specials}
Alice's winning threshold $\AThr$ is linear in Daria's budget $d$, and it is indifferent to the internal ordering of the configuration $\pi$, i.e., it depends only on the number of special rounds (equivalently, regular rounds) in $\pi$.  Formally, there exists a function $\A$ such that for any $1\leq t,1\leq n\leq t,\pi, 0 \leq d$, we have
\[
\AThr(t,n,\pi,d) = \A(t,n,s) \cdot d,
\]
where $s = \left|\{i \mid p_i = \mathbb{S}\}\right|$ is the number of special rounds in $\pi$.
Furthermore, $\A$ is determined by the following recurrence relation:
\begin{enumerate}
\item $\A(t,n,s) = \A(t-1,n-1,s-1)\cdot \left[\frac{k+\A(t-1,n,s-1)}{k+\A(t-1,n-1,s-1)} \right]$
\item $\A(t,n,s) = \A(t-1,n-1,s)\cdot \left[\frac{1+\A(t-1,n,s)}{1+\A(t-1,n-1,s)}\right]$
\item Boundary conditions:  
    \begin{enumerate}
        \item $\A(t,1,s)  = ks + (t-s)$
        \item $\A(t,n,0) = \frac{t-n+1}{n}$
        \item $\A(t,n,t) = k\left[\frac{t-n+1}{n}\right]$
        \item $\A(t,t,s) = \frac{k}{s+k(t-s)}$
    \end{enumerate}
\end{enumerate}
\end{theorem}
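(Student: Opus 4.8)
I would prove by induction on the number of remaining rounds $t$ the single statement: \emph{there is a number $\A(t,n,s)$ depending only on $t$, $n$ and the number of special rounds $s$ such that $W(t,n,\pi,d,a)=\mathrm{Alice}$ if and only if $a\ge\A(t,n,s)\,d$, and $\A$ obeys the recurrence relations and boundary conditions of the theorem.} This single statement delivers everything claimed at once: linearity in $d$, order-independence, the fact that the minimum defining $\AThr$ is attained, and $\AThr(t,n,\pi,d)=\A(t,n,s)\,d$. (Linearity is also transparent from the scaling symmetry ``multiply every bid by $\lambda>0$'', which sends a play from $(t,n,\pi,d,a)$ to a legal play from $(t,n,\pi,\lambda d,\lambda a)$ with the same winner; the case $d=0$ is trivial because Daria can then only bid $0$, which Alice wins for free.) Throughout the induction I would also carry the monotonicities $\A(t,n-1,s)\ge\A(t,n,s)$ and $\A(t,n,s)\ge\A(t-1,n,s)$; these are needed to resolve a case split in the step, and they are preserved because the update maps $X(1+Y)/(1+X)$ and $X(k+Y)/(k+X)$ occurring in relations~2 and~1 are increasing in each argument.

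\textbf{The inductive step.} Fix a state $(t,n,\pi,d,a)$ with $2\le n\le t-1$ and assume the statement for $t-1$ rounds. Because every continuation value depends on Alice's budget only through $a-b'$, Alice never over-bids: against Daria's bid $b$ she either wins the round at cost $b$ (if $p_1=\mathbb{R}$) or $kb$ (if $p_1=\mathbb{S}$), when her budget allows, or concedes at cost $0$. Take $p_1=\mathbb{R}$, so $\piSuff$ still has $s$ special rounds; by the induction hypothesis, after bidding $b$ Alice still wins the game iff
\[
a-b\ge\A(t-1,n,s)\,d \quad\text{or}\quad a\ge\A(t-1,n-1,s)(d-b).
\]
Hence Daria wins iff she has some $b\in[0,d]$ defeating both options, i.e.\ iff the interval
\[
\max\bigl(0,\; a-\A(t-1,n,s)\,d\bigr)\;<\;b\;<\;d-\frac{a}{\A(t-1,n-1,s)}
\]
is nonempty; using the carried monotonicity to decide which branch of the $\max$ is active, this turns out to be equivalent to $a<\A(t,n,s)\,d$ with $\A(t,n,s)$ exactly the right-hand side of relation~2, which simultaneously shows that $a=\A(t,n,s)\,d$ is a win for Alice. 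The case $p_1=\mathbb{S}$ is the same computation with $b$ replaced by $kb$ in Alice's winning option and $\piSuff$ carrying $s-1$ special rounds, and it produces relation~1.

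\textbf{Boundaries and termination.} When $n=1$, Alice may never concede a round, so she must outbid Daria every round; Daria keeps her full budget throughout, bids $d$ each round, and Alice survives iff $a\ge(ks+(t-s))d$, which is boundary condition 3(a). When $n=t$, Alice's ``win this round'' option immediately wins the whole game, so Alice wins iff she can afford to outbid Daria in \emph{some} single round; Daria's optimal play equalizes the cost of all rounds, giving 3(d). Telescoping relation~2 upward from 3(a) with $s=0$ yields 3(b), and telescoping relation~1 upward from 3(a) with $s=t$ yields 3(c). I would also dispatch the base case $t=1$ (then $n=1=t$, and 3(a), 3(d) agree) and note that the recurrence always terminates: each step decreases $t$, while the move $(t,n)\mapsto(t-1,n)$ drives the gap $t-n$ to $0$ and $(t,n)\mapsto(t-1,n-1)$ drives $n$ to $1$.

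\textbf{Consistency / order-independence, and the main obstacle.} The step above assigns $\A(t,n,s)$ the value from relation~2 when the configuration begins with a regular round and the value from relation~1 when it begins with a special round, so to legitimize the induction hypothesis (a single number $\A(t,n,s)$) I must show these agree whenever $1\le s\le t-1$. I would obtain this from the stronger claim that transposing two adjacent rounds of $\Pi$ never changes the winner for any $D,A$: by a bubble-sort argument this reduces to swapping the first two rounds, i.e.\ comparing the prefixes $\mathbb{R}\mathbb{S}$ and $\mathbb{S}\mathbb{R}$ in front of a common suffix. Recasting each round as ``Daria picks an amount $x\ge0$; Alice either pays $\kappa x$ to win the round ($\kappa\in\{1,k\}$) or lets Daria pay $x$ to win it and lose a life'', the two-round prefix becomes a small min--max whose value --- who controls the suffix game, as a function of $(n,d,a)$ --- I claim is symmetric in the multiplier pair $\{1,k\}$, which is a short case analysis on Alice's win/concede pattern across the two rounds. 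This consistency argument, together with the edge-case bookkeeping inside the inductive step (deciding which branch of the $\max$ applies, and checking that ties resolve in Alice's favour so that the threshold is attained rather than merely an infimum), is where essentially all of the real work sits; granting it, the recurrence and all four boundary conditions then drop out mechanically.
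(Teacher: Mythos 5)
Your overall skeleton is the same as the paper's: induction on $t$, the two-option analysis of Alice's response (win the round at cost $b$ or $kb$, or concede) yielding recurrences 1 and 2 as the crossing point of a downward- and an upward-sloping constraint in Daria's bid, direct arguments for the boundaries $n=1$ and $n=t$, and a reduction of order-independence to swapping two adjacent rounds. The inductive step and boundary conditions are fine. The genuine gap is exactly where you park it: the claim that the $\mathbb{S}\mathbb{R}$-prefix and $\mathbb{R}\mathbb{S}$-prefix games have the same winner, which you dismiss as ``a short case analysis on Alice's win/concede pattern across the two rounds.'' This is not a routine four-case check. The bids are adaptive (Daria's second-round bid depends on the first-round outcome, and her optimal bids genuinely differ between the two orderings), so there is no obvious strategy-transfer between the two orders; what has to be shown, after expanding your own recurrences one level below the induction hypothesis, is the commutation identity
\[
F_{1}\bigl(F_{k}(x,y),\,F_{k}(y,z)\bigr)=F_{k}\bigl(F_{1}(x,y),\,F_{1}(y,z)\bigr),
\qquad F_{\kappa}(x,y):=x\cdot\frac{\kappa+y}{\kappa+x},
\]
applied to $x=\A(t-2,n-2,\cdot)$, $y=\A(t-2,n-1,\cdot)$, $z=\A(t-2,n,\cdot)$. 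This is a nontrivial rational-function identity in three variables; in the paper it is precisely the step that is \emph{not} done by hand but verified symbolically with \texttt{Wolfram Mathematica}, separately for $3\le n<t$ and for $n=2$ (where $\A(t-2,n-2,\cdot)$ is undefined and the boundary $\A(\cdot,1,\cdot)$ must be substituted), with $n=t$ handled via boundary 3-(d). Your proposal neither carries out the algebra nor supplies the promised game-theoretic symmetry argument, and you yourself note that ``essentially all of the real work sits'' there; as written, the central claim of the theorem (that the threshold depends only on the number of special rounds, which is also what legitimizes your single-number induction hypothesis) is asserted rather than proved.

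A secondary bookkeeping point: when you do settle the swap, remember the $n=2$ case needs its own treatment (two Daria wins in the prefix end the game outright, so the generic three-variable identity does not apply verbatim), in addition to the $n=t$ case you already route through boundary 3-(d). The auxiliary monotonicities you carry ($\A(t,n-1,s)\ge\A(t,n,s)$, etc.) are plausible and your argument that the update maps are increasing in each argument is the right way to preserve them, but they too should be checked at the boundaries where the recurrences are not used.
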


\begin{proof}
    Let $1 \leq t, 1\leq n\leq t, \pi \in \{\mathbb{S}, \mathbb{R}\}^t, 0 \leq d$.  Denote $s = \left|\{i \mid p_i = \mathbb{S}\}\right|$.
    We start with the boundary conditions.  
    For part 3-(a), we need to show that if $n=1$, then
    \[
    \AThr(t,n,\pi, d) = \left[ ks+(t-s) \right]\cdot d.
    \]
    If $n=1$, then Daria needs to win only one round to be victorious.  In that case she might as well bid her entire budget $d$ in every round, as there is no point in saving part of the budget for future rounds.  Alice wins the game iff she wins all rounds.  This can be achieved iff her budget is enough to win all special and regular rounds, given that Daria indeed always bids $d$. This holds iff $a \geq \left[ks+(t-s)\right]\cdot d$, and we are done with part 3-(a).  
    
    As for 3-(b), this is exactly the statement of Theorem~\ref{thm:one_builder}. 3-(c) is obtained in exactly the same way as Theorem~\ref{thm:one_builder}, by scaling the factor by which Alice needs to overbid Daria in the proof of that theorem by $k$.

    As for part 3-(d), if $n=t$ then Alice needs to win only one round to be victorious, and thus she might as well decide to win any round in which her budget allows for it.  Consequently, Daria's budget must be enough to allow her to bid strictly more than $a$ in all regular rounds and $\frac{a}{k}$ in all special rounds, where $a$ is Alice's budget.  We conclude that Daria can secure victory iff $d > \frac{a}{k}\cdot s + a\cdot (t-s)$.  Part 3-(d) is implied by negating and rearranging this inequality.

    We now proceed to prove that $\AThr$ is linear in $d$.  Specifically, we prove that there is a function $\AThr'$ such that 
    \[
    \AThr(t,n,\pi,d) = \AThr'(t,n,\pi) \cdot d.
    \]
    We prove this by induction on $t$.  For the base case, note that if $t=1$, then we must also have $n=1$ and we obtain the claim by part 3-(a) which we have already proved above.
    
    We now assume that $2\leq t$. 
    We can also assume that $2 \leq n < t$, as the claim has already been proved for the cases $n=1$ and $n=t$.
    Let $a$ be Alice's budget, and consider the round to be played at state $(t,n,\pi,d,a)$.
    Let $\piSuff=(p_2,\ldots,p_t)$ be the suffix of $\pi$, such that $\pi = (p_1, \piSuff)$.
    
    Let us also assume first that the round is special, i.e. $p_1 = \mathbb{S}$, and that Daria bids $b$.  Alice, having observed $b$, has two options:
    \begin{enumerate}
        \item Let Daria win the round, i.e. move the game to the state 
        $(t-1,n-1,
        \piSuff,
        d-b,a)$.
        By induction, Alice can force victory from here iff 
        \begin{align}
            a &\geq 
            \AThr'(t-1,n-1,\piSuff)(d-b).
            \label{eq:let_Daria_win}
        \end{align}
        \item If $a>kb$, then she can decide to win the round, i.e. move the game to the state $(t-1,n,
        \piSuff,
        d,a-kb)$.  By induction, Alice can force victory from here iff 
        \begin{align}
        \label{eq:let_Daria_lose}
            a &\geq 
            kb + 
            \AThr'(t-1,n,\piSuff) d
        \end{align}
    \end{enumerate}

    Of the two, Alice might as well choose the option that puts the least strain on her budget 
    according to Inequalities \ref{eq:let_Daria_win} and \ref{eq:let_Daria_lose}.
    Note that the right-hand side of Inequality \ref{eq:let_Daria_win} is a downward-sloping line as a function of the bid $b$, whereas the right-hand side of Inequality \ref{eq:let_Daria_lose} is an upward-sloping line as a function of $b$. 
    
    Let $b^*_{\mathbb{S}}$ be the bid that equalizes these two functions.  
    If $b>b^*_{\mathbb{S}}$, then Alice chooses the top option. 
    If $b<b^*_{\mathbb{S}}$, then she choose the lower option.  
    In either case, the winning condition is worse for Daria compared to bidding $b^*_{\mathbb{S}}$.
    
    Thus, Daria's best move is to bid $b^*_{\mathbb{S}}$,
    and by plugging that into (say) Inequality \ref{eq:let_Daria_win}, we get the desired winning condition for Alice.

    To this end we compute $b^*_{\mathbb{S}}$ by equating the right-hand sides of Inequalities \ref{eq:let_Daria_win} and \ref{eq:let_Daria_lose}:
    \begin{align}
        kb^*_{\mathbb{S}}+\AThr'(t-1,n,\piSuff)d      &= \AThr'(t-1,n-1,\piSuff)(d-b^*_{\mathbb{S}}) &\iff \nonumber \\
        b^*_{\mathbb{S}}(k+\AThr'(t-1,n-1,\piSuff))   &= d(\AThr'(t-1,n-1,\piSuff) - \AThr'(t-1,n,\piSuff)) &\iff \nonumber \\
        b^*_{\mathbb{S}}  &= \frac{\AThr'(t-1,n-1,\piSuff) - \AThr'(t-1,n,\piSuff)}{k+\AThr'(t-1,n-1,\piSuff)} d \label{eq:b^*_S}
    \end{align}

    Plugging the obtained $b^*_{\mathbb{S}}$ into Inequality \ref{eq:let_Daria_win}, we get
    \begin{align*}
        a &\geq \AThr'(t-1,n-1,\piSuff)\left[1 -  \frac{\AThr'(t-1,n-1,\piSuff) - \AThr'(t-1,n,\piSuff)}{k+\AThr'(t-1,n-1,\piSuff)} \right]d \\
          &= \AThr'(t-1,n-1,\piSuff)\left[\frac{k+ \AThr'(t-1,n,\piSuff)}{k+\AThr'(t-1,n-1,\piSuff)} \right]d
    \end{align*}

    Thus, by recursively defining
    \begin{align}
    \AThr'(t,n,\pi) = 
    \AThr'(t,n,(\mathbb{S}, \piSuff)) := \AThr'(t-1,n-1,\piSuff)\cdot \left[\frac{k+\AThr'(t-1,n,\piSuff)}{k+\AThr'(t-1,n-1,\piSuff)} \right]
    \label{eq:AThr'_special_round}
    \end{align}
    We have shown, by definition of $\AThr$, that 
    \[
    \AThr(t,n,\pi,d) = \AThr'(t,n,\pi)\cdot d,  
    \]
    as desired.
    
    We can do an analogous reasoning for the case that the round at the given state is regular, i.e. $p_1 = \mathbb{R}$.  In this case, the optimal bid by Daria is
    \begin{align}
    b^*_{\mathbb{R}} = \frac{\AThr'(t-1,n-1,\piSuff) - \AThr'(t-1,n,\piSuff)}{1+\AThr'(t-1,n-1,\piSuff)} d,
    \label{eq:b^*_R}
    \end{align}
    and the winning condition for Alice in this case is $\AThr(t,n,\pi,d) = \AThr'(t,n,\pi)\cdot d$, where
    \begin{align}
        \AThr'(t,n,\pi) = 
        \AThr'(t,n,(\mathbb{R}, \piSuff)) := 
        \AThr'(t-1,n-1,\piSuff)\cdot \left[\frac{1+\AThr'(t-1,n,\piSuff)}{1+\AThr'(t-1,n-1,\piSuff)}\right]
        \label{eq:AThr'_regular_round}.
    \end{align}
    We omit the details of the proofs of these analogous claims.

    We now show that the dependence of $\AThr'(t,n,\pi)$ on $\pi$ only amounts to the number of special (equivalently, regular) rounds in $\pi$ and not to the internal ordering between the special and regular rounds.  That is, we show the existence of the function $\A$ from the theorem statement.  Note that this would immediately imply parts 1 and 2 in the theorem statement, by substituting $\AThr'$ with $\A$ in Inequalities \ref{eq:AThr'_special_round} and \ref{eq:AThr'_regular_round}, respectively.  Therefore, this is the only remaining piece of the theorem proof.

    To this end, let $\pi = (p_1,\ldots,p_t)$, $\pi'=(p'_1,\ldots,p'_t)$ be two different binary strings in $\{\mathbb{S},\mathbb{R}\}^t$, with the same number of special rounds.
    We need to prove that $\AThr'(t,n,\pi) = \AThr'(t,n,\pi')$.
    
    We first claim that we can assume, without loss of generality, that $\pi$ and $\pi'$ only differ by a single adjacent switch.  
    In other words, we can assume that there exists some index $1\leq i \leq t-1$ such $p_j = p'_j$ for every $j \in \{1,\ldots,t\} \setminus\{i,i+1\}$, and additionally, one of the following holds:
    \begin{itemize}
        \item $(p_i,p_{i+1}) = (\mathbb{R},\mathbb{S})$ and $(p'_i,p'_{i+1}) = (\mathbb{S},\mathbb{R})$, or
        \item $(p_i,p_{i+1}) = (\mathbb{S},\mathbb{R})$ and $(p'_i,p'_{i+1}) = (\mathbb{R},\mathbb{S})$.
    \end{itemize}
    The reason why this is without loss of generality is that between $\pi$ and $\pi'$ there is a sequence of intermediary strings such that each two adjacent strings differ by only a single adjacent switch.

    To this end, recall that Equations \ref{eq:AThr'_special_round} and \ref{eq:AThr'_regular_round} provide a recursive manner to evaluate $\AThr'$.  Note that in the right-hand side of both of these, all recursive calls to $\AThr'$ are applied to the suffix binary string obtained from the input string by removing the first element.  
    The former equation is applied if the first element of the string is $\mathbb{S}$, and the latter is applied if it is $\mathbb{R}$.
    
    Thus ---
    assuming that $\pi$ and $\pi'$ differ by a single adjacent switch and that the divergence between them occurs at index $1\leq i \leq t-1$ ---
    if we take $\AThr'(t,n,\pi)$ and $\AThr'(t,n,\pi')$ and evaluate them this way, then the first $i-1$ levels of recursive applications of Equations \ref{eq:AThr'_special_round} and \ref{eq:AThr'_regular_round} are identical in both, producing the same expression structure.  The divergence between the two evaluations happens at the i'th application.
    Therefore, 
    denoting $\pi'' = (p_{i+2},\ldots, p_t)$, 
    it is enough to show that
    \begin{lemma}
    For all $2\leq t$ and $2 \leq n \leq t$, we have
    \[\AThr'(t,n,(\mathbb{S},\mathbb{R},\pi'')) = \AThr'(t,n,(\mathbb{R},\mathbb{S},\pi''))\]
    \end{lemma}

    We conclude the proof of the theorem with the proof of the lemma.
    \begin{proof}
    We split the proof to cases.
    First, if $n=t$ then the claim holds by part 3-(d).  We can therefore assume for the rest of the proof that $n<t$.
    
    We now add the assumption that $3\leq n$ (so that $1\leq n-2$).
    By applying Equation \ref{eq:AThr'_special_round} followed by three ``parallel'' applications of Equation \ref{eq:AThr'_regular_round}, we get
    \begin{align}
    &\AThr'(t,n,(\mathbb{S},\mathbb{R},\pi'')) = \nonumber\\
    &\AThr'(t-1,n-1,(\mathbb{R}, \pi''))\cdot \left[\frac{k+\AThr'(t-1,n,(\mathbb{R}, \pi''))}{k+\AThr'(t-1,n-1,(\mathbb{R}, \pi''))} \right] = \nonumber\\
    &
    \AThr'(t-2,n-2,\pi'')\cdot \left[\frac{1+\AThr'(t-2,n-1,\pi'')}{1+\AThr'(t-2,n-2,\pi'')}\right]
    \cdot \left[\frac{k+
    \AThr'(t-2,n-1,\pi'')\cdot \left[\frac{1+\AThr'(t-2,n,\pi'')}{1+\AThr'(t-2,n-1,\pi'')}\right]
    }{k+
    \AThr'(t-2,n-2,\pi'')\cdot \left[\frac{1+\AThr'(t-2,n-1,\pi'')}{1+\AThr'(t-2,n-2,\pi'')}\right]
    } \right]
    \label{eq:S_then_R}
    \end{align}

    Note that the above transitions do not hold when $n=t$, which is why we had to treat that case separately.  Now, by applying Equation \ref{eq:AThr'_regular_round} followed by three ``parallel'' applications of Equation \ref{eq:AThr'_special_round}, we similarly get

    \begin{align}
    &\AThr'(t,n,(\mathbb{R},\mathbb{S},\pi'')) = \nonumber \\
    &\AThr'(t-1,n-1,(\mathbb{S},\pi''))\cdot \left[\frac{1+\AThr'(t-1,n,(\mathbb{S},\pi''))}{1+\AThr'(t-1,n-1,(\mathbb{S},\pi''))}\right] = \nonumber \\
    &
    \AThr'(t-2,n-2,\pi'')\cdot \left[\frac{k+\AThr'(t-2,n-1,\pi'')}{k+\AThr'(t-2,n-2,\pi'')} \right]
    \cdot \left[\frac{1+
    \AThr'(t-2,n-1,\pi'')\cdot \left[\frac{k+\AThr'(t-2,n,\pi'')}{k+\AThr'(t-2,n-1,\pi'')} \right]
    }{1+
    \AThr'(t-2,n-2,\pi'')\cdot \left[\frac{k+\AThr'(t-2,n-1,\pi'')}{k+\AThr'(t-2,n-2,\pi'')} \right]
    }\right] 
    \label{eq:R_then_S}
    \end{align}

    We have verified that the expressions \ref{eq:S_then_R} and \ref{eq:R_then_S} are indeed equal, using the \texttt{Wolfram Mathematica} software.

    Finally, we assume that $n=2$. 
    {The reason we need to treat this differently is that in this case we have $n-2=0$, but $\AThr'(t-2,0,\pi'')$ is not defined.}
    Let $s$ be the number of special rounds in $\pi''$.  Then, similarly to how we obtained Equation \ref{eq:S_then_R}, we get 
    \begin{align}
    &\AThr'(t,2,(\mathbb{S},\mathbb{R},\pi'')) = \nonumber\\
    &\AThr'(t-1,1,(\mathbb{R}, \pi''))\cdot \left[\frac{k+\AThr'(t-1,2,(\mathbb{R}, \pi''))}{k+\AThr'(t-1,1,(\mathbb{R}, \pi''))} \right] = \nonumber\\
    &
    \left[ks+(t-1-s)\right]
    \cdot \left[\frac{k+
    \AThr'(t-2,1,\pi'')\cdot \left[\frac{1+\AThr'(t-2,2,\pi'')}{1+\AThr'(t-2,1,\pi'')}\right]
    }{k+
    ks+(t-1-s)
    } \right] = \nonumber \\
    &
    \left[ks+(t-1-s)\right]
    \cdot \left[\frac{k+
    \left[
    ks+t-2-s
    \right]
    \cdot \left[\frac{1+\AThr'(t-2,2,\pi'')}{1+
    ks+t-2-s
    }\right]
    }{k+
    ks+(t-1-s)
    } \right],
    \label{eq:S_then_R_n_is_2}
    \end{align}
    where here we have also used part 3-(a) from the theorem statement various times.
    Now, similarly to how we obtained Equation \ref{eq:R_then_S}, we get
   
    \begin{align}
    &\AThr'(t,2,(\mathbb{R},\mathbb{S},\pi'')) = \nonumber \\
    &\AThr'(t-1,1,(\mathbb{S},\pi''))\cdot \left[\frac{1+\AThr'(t-1,2,(\mathbb{S},\pi''))}{1+\AThr'(t-1,1,(\mathbb{S},\pi''))}\right] = \nonumber \\
    &
    \left[
    k(s+1) + (t-1-(s+1))
    \right]
    \cdot \left[\frac{1+
    \AThr'(t-2,1, \pi'')\cdot \left[\frac{k+\AThr'(t-2,2,\pi'')}{k+\AThr'(t-2,1, \pi'')} \right]
    }{1+
    k(s+1) + (t-1-(s+1))
    }\right] = \nonumber \\
    &
    \left[
    k(s+1) + (t-1-(s+1))
    \right]
    \cdot \left[\frac{1+
    \left[
    ks+t-2-s
    \right]
    \cdot \left[\frac{k+\AThr'(t-2,2,\pi'')}{k+
    ks+t-2-s
    } \right]
    }{1+
    k(s+1) + (t-1-(s+1))
    }\right]
    \label{eq:R_then_S_n_is_2}
    \end{align}
        As we did for the proof of the case $3\leq n$, we have also verified using \texttt{Wolfram Mathematica} that the expressions \ref{eq:S_then_R_n_is_2} and \ref{eq:R_then_S_n_is_2} are equal.  This concludes the proof of the lemma.
    
    \end{proof}
\end{proof}


\begin{observation}
   The proof of Theorem \ref{thm:optimal_specials} shows that for the sake of winning the game, Daria's optimal bid is almost always a fraction of her current budget and does not depend on Alice's budget (if Daria knows that Alice's budget is low enough she could win by bidding less than her optimal prescribed bid, but that would only affect her leftover budget and not her prospects of winning).  The only case where the optimal bid does depend on Alice's budget is when $t=n$, i.e. when Daria has to win all remaining rounds.  In that case Alice might as well bid her entire budget $a$, and Daria's best strategy is to bid $a+\epsilon$ in regular rounds and $a/k +\epsilon$ in special rounds.
\end{observation}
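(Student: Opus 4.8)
The plan is to note that the Observation is essentially a reading-off of facts already established inside the proof of Theorem~\ref{thm:optimal_specials}, organized into three pieces: (i) the explicit form of Daria's equalizing bid in a non-terminal round, which manifestly does not involve $a$; (ii) a monotonicity property of $\A$ needed both to see that this bid is feasible and that it is optimal simultaneously for every value of $a$; and (iii) a separate, direct argument for the terminal case $t=n$.

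First I would handle $1\le n<t$. The proof of Theorem~\ref{thm:optimal_specials} shows that Daria's best response at state $(t,n,\pi,d,a)$ is the bid $b^{*}$ equalizing the right-hand sides of Inequalities~\ref{eq:let_Daria_win} and~\ref{eq:let_Daria_lose}, and Equations~\ref{eq:b^*_S} and~\ref{eq:b^*_R} give
\[
b^{*}=\frac{\AThr'(t-1,n-1,\piSuff)-\AThr'(t-1,n,\piSuff)}{c+\AThr'(t-1,n-1,\piSuff)}\cdot d, \qquad c\in\{1,k\},
\]
which is a fixed fraction of $d$ with no dependence on $a$. To see that this is a legal bid (i.e.\ $0\le b^{*}\le d$) I would record that $\A(t,n,s)$ is non-increasing in $n$; this is routine by induction on the recurrence of Theorem~\ref{thm:optimal_specials} (and intuitively clear, since requiring Daria to win more rounds can only lower Alice's threshold), and combined with $\A\ge 0$ it forces $b^{*}\in[0,d]$. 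Finally I would argue that $b^{*}$ is optimal uniformly in $a$: if $a\ge\AThr(t,n,\pi,d)$ then Alice wins regardless of Daria's bid, so the bid is immaterial; and if $a<\AThr(t,n,\pi,d)$ then, because at $b=b^{*}$ both of Alice's continuation states carry Alice's winning threshold exactly $\AThr(t,n,\pi,d)$, Alice cannot force victory from either, so bidding $b^{*}$ guarantees Daria the game. Hence there is an optimal bid independent of $a$; Daria can safely bid strictly below it only when she already knows $a$ is small, which then only affects her leftover budget --- precisely the parenthetical in the statement.

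Next I would dispatch the case $t=n$ and explain why it is genuinely exceptional: when $n=t$, Alice wins the whole game the instant she wins any round, so the ``let Daria win'' / ``let Daria lose'' dichotomy underlying the recurrence collapses and there is no equalizing bid. Instead I would invoke part 3-(d) of Theorem~\ref{thm:optimal_specials}, which says Daria secures victory iff $d>\tfrac{a}{k}s+a(t-s)$, and exhibit the matching play: Daria bids $a+\epsilon$ in each regular round and $\tfrac{a}{k}+\epsilon$ in each special round. Then $b'\ge b$ (resp.\ $b'\ge kb$) is impossible for Alice however she allocates her budget across rounds, so Daria wins every round, and her total outlay tends to $\tfrac{a}{k}s+a(t-s)$, meeting the threshold. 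This is the one case where the prescribed bid refers to $a$, precisely because it is the only case in which Alice has no way to concede the current round while staying alive.

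The main obstacle --- the only part that is not pure bookkeeping --- is the uniform-optimality argument: making precise that a single bid $b^{*}$, chosen with no knowledge of $a$, is optimal for all $a$ at once. The crucial input is the identity, read off from the proof of Theorem~\ref{thm:optimal_specials}, that at $b=b^{*}$ both continuation states inherit the same value $\AThr(t,n,\pi,d)$ of Alice's winning threshold; everything else (the monotonicity of $\A$ in $n$, the $t=n$ case) is immediate.
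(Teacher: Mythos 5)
Your proposal is correct and takes essentially the same route as the paper, which offers no separate proof: the observation is meant to be read off the proof of Theorem~\ref{thm:optimal_specials}, and you do exactly that, while adding two pieces of rigor the paper leaves implicit (the monotonicity of $\A$ in $n$ needed for $b^{*}\in[0,d]$, and the argument that $b^{*}$ is optimal uniformly in $a$ because both continuation states at $b=b^{*}$ carry the same threshold). One small correction: the equalizing-bid formula applies only for $2\le n<t$, since $\AThr'(t-1,0,\cdot)$ is undefined; the case $n=1$ is covered by boundary condition 3-(a), where Daria's optimal bid is her entire budget $d$ --- still a fraction of her budget independent of $a$, so the observation is unaffected.
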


\begin{observation}
    The recursive nature of $\A$ calls for a dynamic programming algorithm to compute it.  Such an algorithm would compute $\A(t,n,s)$ in $O(tn)$ time, which is practical for realistic values of $t,n,s$. In fact, the runtime of the algorithm can be reduced to $O(sn)$ or $O((t-s)n)$, due to the boundary conditions for $s=0$ and $t=s$ derived in Theorem~\ref{thm:optimal_specials}, respectively. See section~\ref{sec:eval} for numerical details.
\end{observation}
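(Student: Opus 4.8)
The plan is to make the stated complexities precise by giving an explicit memoized evaluation of the recurrence from Theorem~\ref{thm:optimal_specials} and counting the distinct subproblems it visits. The starting remark is that the recurrence offers a \emph{choice}: to evaluate $\A(t,n,s)$ in the non-boundary regime $2\le n\le t-1$, one may apply relation~1 --- reducing it to the subproblems $\A(t-1,n-1,s-1)$ and $\A(t-1,n,s-1)$, which is legal when $s\ge 1$ --- or relation~2 --- reducing it to $\A(t-1,n-1,s)$ and $\A(t-1,n,s)$, legal when $s\le t-1$ --- and both return the same value since $\A$ is well defined. The boundary conditions 3-(a)--3-(d) terminate the recursion in $O(1)$ arithmetic whenever $n=1$, $n=t$, $s=0$, or $s=t$. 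The whole point is that committing to \emph{one} of the two relations throughout keeps the set of visited subproblems two-dimensional rather than three-dimensional.

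First I would spell out the $O\big((t-s)n\big)$ variant: to compute $\A(t,n,s)$, always apply relation~2. Since relation~2 leaves the third coordinate fixed, every subproblem encountered has the form $\A(t',n',s)$ with $s\le t'\le t$ and $1\le n'\le\min(n,t')$; the recursion descends in $t'$ and halts at $t'=s$ via boundary~3-(c), at $n'=1$ via 3-(a), and at $n'=t'$ via 3-(d), so no ill-defined state with $t'<s$ is ever reached. Filling a table over these pairs bottom-up in increasing $t'$, each entry costs $O(1)$ arithmetic from two entries of the previous level, and there are at most $\sum_{t'=s}^{t}\min(n,t')\le(t-s+1)\,n$ of them, i.e. $O\big((t-s)n\big)\le O(tn)$.

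Symmetrically, the $O(sn)$ variant always applies relation~1, which decrements $t$ and $s$ together, so $t'-s'$ is invariant and every subproblem has the form $\A(t',n',s')$ with $t'-s'=t-s$ and $0\le s'\le s$; now the recursion halts at $s'=0$ via boundary~3-(b) (and 3-(a), 3-(d) as before), and the table over the resulting pairs has $O(sn)$ entries. Running whichever of the two variants is cheaper gives $O\big(\min(s,t-s)\cdot n\big)\le O(tn)$. Correctness is immediate from Theorem~\ref{thm:optimal_specials}: every table update is one instance of the relevant recurrence, so the stored values are exactly $\A$ at the visited states.

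I do not expect a genuinely hard step here --- it is a routine dynamic program --- but the point I would state explicitly is \emph{why relation~2 may be used even when the true next round is special} (and relation~1 when it is regular). This is exactly where the order-independence of $\A$ from Theorem~\ref{thm:optimal_specials} is used: $\A(t,n,s)$ equals $\AThr'$ of every string in $\{\mathbb{S},\mathbb{R}\}^{t}$ with $s$ special rounds, so as long as $s\le t-1$ one may pretend the next remaining round is regular, and as long as $s\ge 1$ one may pretend it is special. Without this licence a recursion forced to follow the actual pattern of special and regular rounds would have to carry all three coordinates independently and could touch $\Theta\big(tn\cdot\min(s,t-s)\big)$ states; collapsing to a single relation is precisely what shrinks the state space to two dimensions and yields the claimed bounds.
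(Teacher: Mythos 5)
Your proposal is correct and matches what the paper intends: the observation is stated without proof, and the two specializations you describe (always applying relation~1 until $s'=0$, or always applying relation~2 until $t'=s'$) are exactly the reductions the paper attributes to the boundary conditions 3-(b) and 3-(c), with the order-independence of $\A$ from Theorem~\ref{thm:optimal_specials} supplying the licence to commit to a single relation throughout. There are no gaps; your explicit justification of why one may disregard the actual special/regular pattern is the only nontrivial point, and you handle it correctly.
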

    
\begin{remark}
    Since only the number of remaining special rounds affects $\AThr$ and not their interleaving with the regular rounds, we update the notation of a state from $(t,n,\pi,d,a)$ to $(t,n,s,d,a)$ from now on.
\end{remark}

Unfortunately, we could not find a closed-form formula for $\A$ (despite much effort) and we suspect that such a formula does not exist.
However, we can still analyze the asymptotic behavior of $\A$ for some regime of the parameters.
To this end,
we first derive lower and upper bounds on $\A$.

\begin{theorem}\label{optimal_vs_trivials}
    Let $1\leq t$, $1\leq n \leq t$ and $0\leq s\leq t$, such that $n-1 \leq s$.  Then we have 
    \[\frac{t+s(k-1) - k(n-1)}{n}\leq \A(t,n,s) \leq \frac{t+s(k-1)-(n-1)}{n}. \]
\end{theorem}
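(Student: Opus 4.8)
The plan is to prove both inequalities simultaneously by induction on $t$, driven by the recurrence relations for $\A$ established in Theorem~\ref{thm:optimal_specials}. Write $\underline a(t,n,s) = \frac{t+s(k-1)-k(n-1)}{n}$ and $\bar a(t,n,s) = \frac{t+s(k-1)-(n-1)}{n}$ for the target lower and upper bounds. The standing hypothesis $n-1\le s$ is preserved all the way down the recursion, and it is exactly what is needed to keep the ``child'' states that appear in the recurrences inside the range where the induction hypothesis and the boundary conditions of Theorem~\ref{thm:optimal_specials} apply.

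For the base cases I would just invoke the closed forms from Theorem~\ref{thm:optimal_specials}. If $n=1$ (which also covers $t=1$), part 3-(a) gives $\A(t,1,s)=ks+(t-s)$, which equals both $\underline a(t,1,s)$ and $\bar a(t,1,s)$. If $n=t$, then $n-1\le s$ forces $s\in\{t-1,t\}$, and part 3-(d) gives $\A(t,t,s)=\frac{k}{s+k(t-s)}$; a one-line computation using $k\ge 1$ then confirms $\underline a(t,t,s)\le\A(t,t,s)\le\bar a(t,t,s)$. If $s=t$, part 3-(c) gives $\A(t,n,t)=k\cdot\frac{t-n+1}{n}$, which equals $\underline a(t,n,t)$ and, again using $k\ge 1$, is at most $\bar a(t,n,t)$. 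This disposes of every state with $n\in\{1,t\}$ or $s=t$.

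For the inductive step I would take $2\le n\le t-1$ and $n-1\le s\le t-1$, and apply the \emph{second} recurrence of Theorem~\ref{thm:optimal_specials} (the ``regular first round'' one), which I write as $\A(t,n,s)=f\!\left(\A(t-1,n-1,s),\,\A(t-1,n,s)\right)$ with $f(x,y)=\frac{x(1+y)}{1+x}$. Using this recurrence rather than the first is the right choice: its child states $(t-1,n-1,s)$ and $(t-1,n,s)$ both still satisfy the theorem's hypotheses (in particular $s\le t-1$ lets us decrement $t$ while keeping $n\le t-1$, and $n-1\le s$ passes down to $n-2\le s$), whereas the first recurrence would hit a boundary issue precisely when $s=n-1$. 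The structural fact that makes the induction work is that $f$ is non-decreasing in each argument on $[0,\infty)^2$, since $\partial_x f=(1+y)/(1+x)^2\ge 0$ and $\partial_y f=x/(1+x)\ge 0$. One checks that the lower-bound quantities $\underline a(t-1,n-1,s)$ and $\underline a(t-1,n,s)$ are nonnegative — this is where $n-1\le s$ and $n\le t-1$ are used — so that feeding the inductive bounds into $f$ and invoking monotonicity gives
\[
f\!\left(\underline a(t-1,n-1,s),\,\underline a(t-1,n,s)\right)\ \le\ \A(t,n,s)\ \le\ f\!\left(\bar a(t-1,n-1,s),\,\bar a(t-1,n,s)\right).
\]
It then remains to verify that $f\!\left(\bar a(t-1,n-1,s),\bar a(t-1,n,s)\right)=\bar a(t,n,s)$ and that $f\!\left(\underline a(t-1,n-1,s),\underline a(t-1,n,s)\right)\ge\underline a(t,n,s)$.

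Both of these last two facts I expect to be routine after introducing the abbreviation $m=(t-1)+s(k-1)$, which makes the numerators telescope. For the upper bound one has $1+\bar a(t-1,n-1,s)=\frac{m+1}{n-1}$ and $1+\bar a(t-1,n,s)=\frac{m+1}{n}$, so the common factor cancels and the identity $f(\cdot,\cdot)=\bar a(t,n,s)$ falls out exactly (the upper bound is ``reproduced'' by the recurrence). For the lower bound, after clearing the positive denominator, the inequality should collapse to $(k-1)(n-1)\ge 0$, which is exactly where the hypothesis $k\ge 1$ enters. So the main obstacle is not any single hard step but the bookkeeping: making the case split so that an applicable recurrence is always available, checking that the child states remain valid, and — the one genuinely load-bearing point — verifying that every quantity handed to $f$ lies in $[0,\infty)$, the region on which $f$ is monotone. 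That nonnegativity is what forces the restriction $n-1\le s$ in the statement.
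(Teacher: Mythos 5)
Your proof is correct, but it takes a genuinely different route from the paper. The paper proves both bounds by exhibiting explicit strategies: for the lower bound, Daria constantly bids $d/n$ and Alice's cheapest response is to concede the $n-1$ most expensive (special) rounds, which is where the hypothesis $n-1\le s$ enters; for the upper bound, Alice bids $d/n$ in regular and $kd/n$ in special rounds, forcing Daria to overspend. You instead run an induction on $t$ through the regular-round recurrence of Theorem~\ref{thm:optimal_specials}, using monotonicity of $f(x,y)=\frac{x(1+y)}{1+x}$ on $[0,\infty)^2$; I checked the load-bearing details and they all go through --- the hypothesis $n-1\le s$ indeed guarantees nonnegativity of the lower-bound quantities fed to $f$ (one gets numerators of at least $t-n$ and $t-n+k$), the upper bound is reproduced exactly by the recurrence via the telescoping $1+\bar a(t-1,j,s)=\frac{m+1}{j}$, and the lower-bound inequality does collapse to $(k-1)(n-1)\ge 0$. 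Your case split correctly dodges the states where the chosen recurrence is unavailable ($s=t$, $n\in\{1,t\}$ are handled by the boundary formulas). The trade-off: your argument is self-contained given Theorem~\ref{thm:optimal_specials} and arguably tighter in its bookkeeping, but it is purely computational and depends on the correctness of that theorem's recurrences, whereas the paper's strategic proof is independent of it and, more importantly, identifies the lower bound $TS(t,n,s)$ as the literal performance of Daria's constant-bid strategy --- an interpretation that Corollary~\ref{cor:asymptotic} and Proposition~\ref{prop:probabilistic_variant} subsequently rely on and which your proof does not supply.
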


\begin{proof}
    Let $t,n,s$ be as above, and consider the state $(t,n,s,d,a)$, where $d$ and $a$ are the budgets of Daria and Alice, respectively.
    Suppose that, starting from this state and until the end of the game, Daria constantly bids the same bid $b=\frac{d}{n}$.
    Then, the optimal strategy of Alice is to let Daria win $n-1$ special rounds and win all of the remaining $s-n+1$ special and $t-s$ many regular rounds. Therefore, if Alice's budget $a$ is lower than $\frac{d}{n}k(s-n+1)+\frac{d}{n}(t-s)$ then she loses the game. This gives the required lower bound on $\A(t,n,s)$.

    Now we switch to proving the upper bound.
    Consider the following strategy by Alice: she bids $d/n$ in every regular round and $kd/n$ in every special round. To outbid Alice, Daria must spend more than
    $d/n$ in every round she wins. Since the total budget of Daria is $d$, she can win at most $n-1$ rounds which is not enough. Also note that Alice can allow Daria to win $n-1$ rounds (which in the worst case are all regular rounds). Thus Alice wins if her budget is 
    at least $\left(\frac{t-s-(n-1)}{n}+\frac{sk}{n}\right)d$.  This proves the claim.
\end{proof}

%

Let $TS(t,n,s):=\frac{t-s}{n} + \frac{k(s-n+1)}{n}$ be the lower bound from Theorem~\ref{optimal_vs_trivials}.
By the ``squeeze'' (or ``sandwich'') theorem in calculus, the bounds from the theorem
 immediately imply the following.

\begin{corollary}
\label{cor:asymptotic}
\[
\lim_{\frac{kn}{t+(k-1)s}\rightarrow 0 } \frac{TS(t,n,s)}{\A(t,n,s)} = 1
\]

\end{corollary}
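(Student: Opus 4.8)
The plan is to deduce the corollary directly from the two-sided estimate in Theorem~\ref{optimal_vs_trivials} by a squeeze argument, after a short bookkeeping check. First I would observe that $TS(t,n,s) = \frac{t-s}{n} + \frac{k(s-n+1)}{n}$ equals, after clearing the common denominator, $\frac{t + s(k-1) - k(n-1)}{n}$, which is precisely the lower bound in Theorem~\ref{optimal_vs_trivials}. Writing $U(t,n,s) := \frac{t + s(k-1) - (n-1)}{n}$ for the upper bound from that theorem, we then have $TS(t,n,s) \le \A(t,n,s) \le U(t,n,s)$ for every admissible triple $(t,n,s)$, i.e.\ with $1 \le t$, $1 \le n \le t$, $0 \le s \le t$ and $n-1 \le s$. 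I would also record that all three quantities are strictly positive: $t \ge 1$ keeps the relevant numerators positive, and $TS = 0$ would force $s = t$ and $s = n-1$ simultaneously, i.e.\ $t = n-1$, contradicting $n \le t$.

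Next, dividing the inequality through by $\A(t,n,s) > 0$ gives $\frac{TS(t,n,s)}{U(t,n,s)} \le \frac{TS(t,n,s)}{\A(t,n,s)} \le 1$, so it suffices to show that the leftmost ratio tends to $1$. Setting $X := t + (k-1)s$ (which is $\ge t \ge 1$ since $k \ge 1$), I would rewrite
\[
\frac{TS(t,n,s)}{U(t,n,s)} = \frac{X - k(n-1)}{X - (n-1)} = \frac{1 - k(n-1)/X}{1 - (n-1)/X}.
\]
Since $k \ge 1$ and $n-1 \ge 0$, we have $0 \le \frac{n-1}{X} \le \frac{k(n-1)}{X} \le \frac{kn}{X} = \frac{kn}{t+(k-1)s}$, and the right-hand side tends to $0$ by hypothesis; hence both $\frac{k(n-1)}{X}$ and $\frac{n-1}{X}$ tend to $0$, so the numerator and denominator above each converge to $1$ and thus so does their quotient. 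Applying the squeeze theorem to $\frac{TS}{U} \le \frac{TS}{\A} \le 1$ then yields $\frac{TS(t,n,s)}{\A(t,n,s)} \to 1$.

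There is essentially no hard step here; the only points requiring care are the elementary identity identifying $TS$ with the lower bound of Theorem~\ref{optimal_vs_trivials}, the positivity of the denominators (so that dividing and squeezing are legitimate), and making precise what the informal limit notation means, namely that the convergence is taken along any sequence of admissible parameter triples $(t,n,s)$ — with $k$ held fixed — for which $\frac{kn}{t+(k-1)s} \to 0$.
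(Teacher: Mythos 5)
Your proposal is correct and follows exactly the route the paper intends: the paper derives the corollary by applying the squeeze theorem to the two-sided bound of Theorem~\ref{optimal_vs_trivials}, which is precisely your argument, only spelled out in full (including the identification of $TS$ with the lower bound, the positivity checks, and the explicit computation that the ratio of the two bounds tends to $1$).
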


That is, if $kn$ is low enough compared to $t+(k-1)s$, the simple strategy by Daria of \emph{constantly} bidding $d/n$ performs asymptotically as well as the optimal strategy does, when starting from the state $(t,n,s,d,a)$.
We remark that for realistic values of $t,n,s,k$, the expression $\frac{kn}{t+(k-1)s}$ is indeed very small.  See section \ref{sec:eval} for more details.

Furthermore, this corollary has implications for a probabilistic variant of $\mathcal{G}^1_k$, which we describe in the next section.

\subsection{A Probabilistic Variant of $\mathcal{G}^1_k$}

    Consider the game $\mathcal{G}^1_{k,p}$ with parameters $k,p, T,N,D,A$, where $k,T,N,D,A$ are as in $\mathcal{G}^1_k$, and $p \in [0,1]$ is the (known) probability that a given round is special.
    The different round types are i.i.d., and both players observe the type of any round only when that round starts.
    Apart from that, this game is conducted exactly like $\mathcal{G}^1_k$.
    
    Since $T$ is very large in practice, the probability that the actual (random) number of special rounds is very different than the (known) expectation is exponentially low due to concentration bounds on the Bernoulli random variables.  Thus, the simple strategy of constantly bidding $D/N$, as suggested by Corollary \ref{cor:asymptotic}, is still approximately optimal with very high probability.
    In this section we make this intuition precise.
    
    Similarly to $\mathcal{G}^1_k$, the state prior to any round is determined by the tuple $(t,n,d,a)$.  Like the parameter $k$, we also omit $p$ from the state description since it is fixed throughout the game's execution.  Note that the expected number of remaining special rounds starting from state $(t,n,d,a)$ is $tp$.

    \begin{proposition}
    \label{prop:probabilistic_variant}
        Consider the game $\mathcal{G}^1_{k,p}$.
        Let $\epsilon > 0$ and $\delta > 0$.  Let $t,n$ such that $\left|\frac{TS(t,n,\lfloor tp-\delta\rfloor )}{\A(t,n,\lfloor tp-\delta \rfloor)} - 1 \right| < \epsilon$ (such $t,n$ exist by Corollary \ref{cor:asymptotic}).  Then $\Pr_s\left[\left|\frac{TS(t,n,s)}{\A(t,n,s)} - 1 \right| < \epsilon \right] \geq 1- 2e^{-\frac{2\delta^2}{t}} $.
    \end{proposition}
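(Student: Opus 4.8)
The plan is to split the proposition into a deterministic ``monotonicity'' part and a probabilistic ``concentration'' part, and then glue them. Fix $t,n,k,p,\delta,\epsilon$ as in the statement, let $s$ denote the random number of special rounds among the $t$ rounds, and set $s_0:=\lfloor tp-\delta\rfloor$. Because the round types are i.i.d.\ Bernoulli$(p)$, $s$ is $\mathrm{Bin}(t,p)$ with mean $tp$, and for fixed $t,n,k$ the quantity $TS(t,n,s)/\A(t,n,s)$ is just a deterministic function of the integer $s$. I would establish: (i) $\bigl|\frac{TS(t,n,s)}{\A(t,n,s)}-1\bigr|<\epsilon$ for \emph{every} integer $s$ with $s_0\le s\le t$; and (ii) $\Pr[\,s>tp-\delta\,]\ge 1-2e^{-2\delta^2/t}$. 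Since $s_0\le tp-\delta<s$ forces $s\ge s_0$, the event in (i) contains the event in (ii), so combining them proves the proposition.

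For (ii) I would simply invoke Hoeffding's inequality: $s$ is a sum of $t$ independent $[0,1]$-valued variables with mean $tp$, so $\Pr[\,|s-tp|\ge\delta\,]\le 2e^{-2\delta^2/t}$, and the complement of $\{|s-tp|\ge\delta\}$ is contained in $\{s>tp-\delta\}$. (Only the lower tail of $s$ is actually relevant here, so a one-sided Chernoff bound would improve the constant to $1-e^{-2\delta^2/t}$; I would keep the two-sided form to match the statement.)

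For (i) the key tool is the sandwich of Theorem~\ref{optimal_vs_trivials}. Writing $US(t,n,s):=\frac{t+s(k-1)-(n-1)}{n}$ for its upper bound, we have $TS(t,n,s)\le\A(t,n,s)\le US(t,n,s)$ whenever $n-1\le s\le t$, hence $\frac{TS}{\A}\le 1$ and $1-\frac{TS(t,n,s)}{\A(t,n,s)}\le 1-\frac{TS(t,n,s)}{US(t,n,s)}=\frac{(k-1)(n-1)}{\,t+(k-1)s-(n-1)\,}$ (using $TS(t,n,s)=\frac{t+s(k-1)-k(n-1)}{n}$). For $k\ge 1$ this last expression is non-negative and non-increasing in $s$. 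The pair $(t,n)$ furnished by Corollary~\ref{cor:asymptotic} is obtained by driving $\frac{kn}{t+(k-1)s_0}$ to $0$ --- the very quantity that powers the squeeze there --- and (for $p>0$, with $n$ sub-linear in $t$, which is exactly the regime in which such $(t,n)$ exist) the same choice makes both $\frac{(k-1)(n-1)}{t+(k-1)s_0-(n-1)}<\epsilon$ and $s_0\ge n-1$. Thus, for such $(t,n)$ and every integer $s\in[s_0,t]$, the displayed bound gives $1-\frac{TS(t,n,s)}{\A(t,n,s)}<\epsilon$, which is (i). The case $p=0$ is degenerate ($\mathcal{G}^1_{k,0}=\mathcal{G}^1$, $\A\equiv TS$, ratio exactly $1$).

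The main obstacle is the ``pointwise-to-interval'' upgrade inside (i): the hypothesis only supplies $\epsilon$-closeness of the \emph{true} ratio at the single point $s_0$, while we need it on all of $\{s_0,\dots,t\}$. I would sidestep proving monotonicity of $s\mapsto TS(t,n,s)/\A(t,n,s)$ itself --- which seems hard since no closed form for $\A$ is available --- by instead bounding $1-TS/\A$ by the explicit, manifestly $s$-monotone sandwich quantity $\frac{(k-1)(n-1)}{t+(k-1)s-(n-1)}$. The (mild) price is that this implicitly requires the $(t,n)$ coming from Corollary~\ref{cor:asymptotic} to make the \emph{sandwich bound}, not merely the true ratio, smaller than $\epsilon$ at $s_0$; this is automatic because that corollary is itself proved via this sandwich, and for a fully clean writeup I would just restate the proposition's hypothesis in the equivalent form $\frac{(k-1)(n-1)}{t+(k-1)\lfloor tp-\delta\rfloor-(n-1)}<\epsilon$.
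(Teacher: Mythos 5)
Your proposal is correct and follows essentially the same route as the paper's own proof: a deterministic step showing the approximation quality only improves as $s$ grows beyond $\lfloor tp-\delta\rfloor$, combined with Hoeffding's inequality to place $s$ in that range with probability at least $1-2e^{-2\delta^2/t}$. You are in fact somewhat more careful than the paper, which invokes Corollary \ref{cor:asymptotic} (a limit statement) to transfer $\epsilon$-closeness from $s_0=\lfloor tp-\delta\rfloor$ to nearby $s$; your explicit monotone bound $1-\frac{TS}{\A}\leq\frac{(k-1)(n-1)}{t+(k-1)s-(n-1)}$ makes that transfer rigorous, and the caveat you flag (that the hypothesis should really be read as the sandwich bound being small at $s_0$) applies equally to the paper's argument.
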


    \begin{proof}
        First, we observe that for any $x \in \left[\lfloor tp-\delta \rfloor, \lfloor tp + \delta \rfloor \right]$, we have
        \[
        \frac{kn}{t+(k-1)x} \leq \frac{kn}{t+(k-1)(\lfloor tp-\delta \rfloor)}.
        \]

        Therefore, for any such $x$, if $s = x$ then Corollary \ref{cor:asymptotic} implies that $\left|\frac{TS(t,n,s)}{\A(t,n,s)} - 1 \right| < \epsilon$.
        We now use Hoeffding's inequality to establish that $s \in \left[\lfloor tp-\delta \rfloor, \lfloor tp + \delta \rfloor \right] $ with probability at least $1- 2e^{-\frac{2\delta^2}{t}}$, and we are done.
    \end{proof}

\section{$\mathcal{G}^m$ Game}
\label{sec:g_m_game}

The game $\mathcal{G}^m$ is similar to $\mathcal{G}^1$ and is also parameterized by $T,N,D,A$.  
However, unlike $\mathcal{G}^1$, in the game $\mathcal{G}^m$ there are $m$ block builders in each round which form an inclusion committee.
In order for Daria to win a round, i.e. have her transaction be included in a block, at least one of the builders has to include Daria's transaction in that builder's local ``inclusion list''.
Builders are myopic: they choose a strategy that maximizes their expected utility in the current round. 

We can model this interaction by specifying payments Daria and Alice offer builders depending on the number of builders that include Daria's transaction. 
{We consider anonymous payment rules, i.e. payment rules that do not depend on builders' identities.  In particular, we assume}
Daria pays $B^t$ in total to the $t$ builders who include her transaction, {for any $0 < t\leq m$} (we denote $B^0 := 0$), 
and
Alice pays $C^l$ in total to the $l$ builders who do not include Daria's transaction.
{Both $B^t$ and $C^l$ are equally divided between the includers and the non-includers, respectively.}

We consider two natural ways to model competition between Daria and Alice in this setting.
In the first approach, we look at the \emph{budget-balanced mechanism}, where Daria proposes a total payment of $B_i$ in round $i$, independent of the number of builders that include her transaction. All builders that include her transaction share the payment $B_i$ equally. That is, $B_i^t=B_i$ is constant for $0 < t \leq m$.
Alice's response is to offer a fixed payment $c_i$ to anyone who does not include Daria's transaction, that is $C_i^l = lc_i$.

The second approach builds upon the idea of~\cite{censorship_resistance}.
Daria offers a conditional payment contract of the following form: 
in each round $i$, if exactly one builder includes Daria's transaction, the builder is paid some large amount $B_i$. If more than one builder includes her transaction, all including builders are paid some small amount $b_i < B_i$. Builders that do not include her transaction get paid $0$. That is, $B_i^{t}=tb_i$ for $t>1$ and $B_i^t=B_i$ for $t=1$. As before, Alice's response is to offer a fixed payment $c_i$ to anyone who does not include Daria's transaction, that is $C_i^l = lc_i$.
 
We analyze $\mathcal{G}^m$ under these two mechanisms in the next two sections.

\subsection{{Analysis of the Budget-Balanced Mechanism}}
\label{sec:budget_balanced_analysis}

We start by
analyzing the behavior of block builders.
{In particular, we will be analyzing the \emph{symmetric} Nash equilibria in this setting, i.e., equilibria in which builders play the same strategy.}


Consider some round $i$.
First, if $B_i \leq c_i$,
{implying in particular that each builder gets paid more for not including as opposed to including Daria's transaction,}
then there is a single symmetric equilibrium where all builders do not include, and Alice wins with probability 1.
{
Next, if $B_i > m \cdot c_i$, then similarly there is a single equilibrium where all builders include Daria's transaction, and Daria wins the round with probability 1.
Finally,
if $c_i < B_i < m \cdot C_i$,
}
then there is no pure strategy equilibrium of the game $\mathcal{G}^m$ in which all block builders play the same pure strategy.

Therefore, we look for a symmetric totally-mixed Nash equilibrium solution in which there is some probability $p_i$, such that each builder includes Daria's transaction with probability $p_i$ and does not include the transaction with probability $1-p_i$. To pin down $p_i$, we need to solve the indifference condition that the expected utility for some builder obtained from inclusion equals the expected utility from non-inclusion: $E[inclusion]=E[no-inclusion]$. The condition is equivalent to: 

\begin{equation}\label{indifference_balanced}
    \sum_{j=0}^{m-1}{m-1 \choose j}p_i^j (1-p_i)^{m-1-j}\frac{B_i}{j+1} = c_i.
\end{equation}

Simplifying the condition using tricks on binomial coefficients gives the equivalent condition:

\begin{equation}\label{indifference_balanced_simple}
    B_i(1-(1-p_i)^m) = m p_i c_i.
\end{equation} 

The probability that Alice wins round $i$ is equal to the probability that no block builder includes Daria's transaction: 
\[
P_i^A := (1-p_i)^m,
\]
and the probability that Daria wins round $i$ is 
\[
P_i^D := 1 - P_i^A.
\]
For the rest of the analysis we assume that in each round the builders play according to the corresponding symmetric equilibrium.


We now derive a bound on Alice's budget that is equal to $m$ times the bound from Theorem~\ref{thm:one_builder}, under which she can win the entire game $\mathcal{G}^m$ with certainty.

\begin{theorem}\label{condition_alice_k}
    If $\frac{(T-N+1)m}{N}D\leq A$, then Alice has a strategy with which she wins with certainty. 
\end{theorem}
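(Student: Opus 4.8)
The plan is to reduce budget-balanced $\mathcal{G}^m$, from Alice's point of view, to the game $\mathcal{G}^1$ of Theorem~\ref{thm:one_builder} with Alice's budget rescaled by $1/m$, and then quote that theorem. The key observation is that in any round of budget-balanced $\mathcal{G}^m$, after Daria proposes a total payment $B$, Alice only ever needs one of two responses. If she sets $c=B$ then $B\le c$, so by the equilibrium analysis preceding the theorem all $m$ builders exclude Daria's transaction, Alice wins the round with certainty, and she pays $C^m=mc=mB$. If instead she sets $c=0$ (and $B>0$) then $B>0=mc$, so all builders include, Daria wins with certainty, and Alice pays $C^0=0$. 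Thus Alice can, at will, either win the current round at cost exactly $mB$ or concede it at cost $0$; and both responses lie in a pure-equilibrium regime, so — under the standing assumption that builders play the symmetric equilibrium — a run in which Alice always plays this way unfolds deterministically, which is what justifies the phrase ``with certainty''.

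With that in hand I would set up the reduction. The idea is to simulate, move for move, a winning strategy for Alice in the game $\mathcal{G}^1$ with parameters $T,N,D$ and Alice-budget $A/m$; such a strategy exists precisely when $\frac{T-N+1}{N}D\le \frac{A}{m}$, by Theorem~\ref{thm:one_builder}. We may assume this $\mathcal{G}^1$-strategy is normalized so that when it decides to win a round against Daria's bid $B$ it bids exactly $b'=B$ (cost $B$), and when it decides to concede it bids $b'=0$ (cost $0$); this is without loss of generality and does not change which rounds it wins. Now translate it to $\mathcal{G}^m$: whenever it would win a round against bid $B$, have Alice answer $c=B$; whenever it would concede, have Alice answer $c=0$. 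The translated play produces exactly the same sequence of round winners against any behaviour of Daria (so it wins the $\mathcal{G}^m$ game iff the $\mathcal{G}^1$-strategy wins the $\mathcal{G}^1$ game), it is deterministic, and its cumulative spend equals $m$ times the cumulative spend of the $\mathcal{G}^1$-strategy; since the latter never exceeds $A/m$ at any point, the former never exceeds $A$, so Alice's per-round budget constraint holds throughout.

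Combining the two pieces: the hypothesis $\frac{(T-N+1)m}{N}D\le A$ is exactly $\frac{T-N+1}{N}D\le \frac{A}{m}$, so Theorem~\ref{thm:one_builder} supplies a winning $\mathcal{G}^1$-strategy for Alice with budget $A/m$, and its translation is a strategy with which Alice wins $\mathcal{G}^m$ with certainty. I do not anticipate a real obstacle here; the two places that need care are (i) checking that the responses $c=B$ and $c=0$ really fall into the ``all exclude'' and ``all include'' pure regimes respectively (so no mixed-equilibrium randomness is ever introduced and the win is deterministic), and (ii) the bookkeeping showing that the per-round $m$-fold blow-up of Alice's cost in $\mathcal{G}^m$ is exactly compensated by shrinking her budget to $A/m$ in the reduction — in particular that this, rather than the naive-looking but wrong strategy of contesting every cheap round, is what keeps her within budget once one accounts for the game terminating as soon as Alice has $T-N+1$ wins.
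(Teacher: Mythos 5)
Your proposal is correct. The paper's own proof is more direct: Alice obliviously sets $c_i = D/N$ in every round, so any round Daria wins costs her more than $D/N$ (hence she wins at most $N-1$ rounds), while each round Alice wins costs her $mD/N$, for a total of at most $(T-N+1)mD/N \le A$. Your version packages the same underlying idea as a black-box reduction to Theorem~\ref{thm:one_builder} with Alice's budget rescaled to $A/m$, using the observation that the responses $c=B$ and $c=0$ both land in pure-equilibrium regimes (win deterministically at cost $mB$, or concede at cost $0$). Since the winning $\mathcal{G}^1$ strategy from Theorem~\ref{thm:one_builder} is precisely the threshold rule ``contest iff $B \le D/N$,'' the two strategies nearly coincide; the reduction framing buys you modularity and, more substantively, cleaner bookkeeping. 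By answering $c=0$ in conceded rounds rather than keeping $c=D/N$, your Alice never enters the mixed-equilibrium regime when $D/N < B_i < mD/N$, so the outcome of every round is deterministic and Alice never pays non-including builders in rounds that Daria wins --- a payment stream the paper's accounting quietly ignores. In that respect your argument is slightly tighter than the one in the paper, and the ``with certainty'' claim is cleaner under your strategy.
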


\begin{proof}
Alice applies a similar strategy to the one from the proof of Theorem~\ref{thm:one_builder}: she always sets $c_i = \frac{D}{N}$. 
Note that if $B_i\leq \frac{D}{N}$, then Alice wins round $i$ with probability one, and she pays $\frac{mD}{N}$ in that round. 
Thus, if $\frac{(T-N+1)m}{N}D\leq A$, then Alice has enough budget to win $T-N+1$ such rounds.
However, if $B_i>\frac{D}{N}$, then Daria pays more than $\frac{D}{N}$ if she wins that round.
Thus, she cannot win $N$ rounds as that would exceed her budget. This concludes the proof of the theorem. 
\end{proof}

To complement Theorem \ref{condition_alice_k}, we derive an almost matching bound to $m$ times the bound in Theorem~\ref{thm:one_builder}, under which Daria can win with high probability.

\begin{theorem}\label{condition_daria_k}
If $A<\frac{(T-4N+1)m}{N}D$, then Daria has a strategy with which she wins with probability at least $1-\frac{1}{2^{\Theta(N)}}$. 
\end{theorem}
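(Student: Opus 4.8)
The plan is to beat Alice with a single, non-adaptive strategy: Daria bids $B_i=\tfrac DN$ in every round until she has secured $N$ round-wins, and I will show this fails with probability only $2^{-\Theta(N)}$. This bid is always affordable: in the budget-balanced mechanism Daria pays $B_i$ only when at least one builder includes her transaction, so after $j<N$ round-wins she has spent exactly $j\cdot\tfrac DN$ and retains $\tfrac{N-j}{N}D\ge\tfrac DN$. Against this fixed bid, in round $i$ Alice picks some response $c_i$ (adaptive, but w.l.o.g.\ deterministic by backward induction over the chance nodes), the builders play the symmetric equilibrium of Section~\ref{sec:budget_balanced_analysis}, and Daria is blocked with probability $q_i=P_i^A=(1-p_i)^m$. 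Write $X_i$ for the indicator that Daria is blocked in round $i$, so $\mathbb E[1-X_i\mid\mathcal F_{i-1}]=1-q_i$.

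The key per-round estimate I would prove is $q_i\le \tfrac{c_i N}{D}$, valid in all three regimes. In the degenerate regimes ($B_i\le c_i$, giving $q_i=1\le mc_i N/(Dm)$ since $c_i\ge D/N$; and $B_i\ge mc_i$, giving $q_i=0$) it is immediate. In the mixed case, substituting the indifference identity \eqref{indifference_balanced_simple} with $B_i=\tfrac DN$ reduces $q_i\le \tfrac{c_iN}{D}=\tfrac{1-q_i}{mp_i}$ to the elementary inequality $(1-p)^m(1+mp)\le 1$ for $p\in[0,1]$, which holds because the left-hand side equals $1$ at $p=0$ and has non-positive derivative. Now the budget enters at exactly one point: whenever $X_i=1$ all $m$ builders exclude, so Alice actually pays $m c_i$ that round; summing the per-round estimate over blocked rounds and using that Alice's total spend is at most $A$ yields the \emph{almost-sure} bound
\[
\sum_{i:\,X_i=1} q_i \;\le\; \frac{N}{mD}\sum_{i:\,X_i=1} m c_i \;\le\; \frac{AN}{mD}.
\]

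Next I would turn ``Daria loses'' into a clean large-deviation event. If Daria loses she wins at most $N-1$ rounds, so $X_i=0$ for at most $N-1$ indices and $\sum_i X_i\ge T-N+1$; combining with the display above and the hypothesis $\tfrac{AN}{mD}<T-4N+1$ gives $\sum_{i=1}^T q_i\le \tfrac{AN}{mD}+(N-1)<T-3N$, hence $\sum_{i=1}^T(1-q_i)>3N$. Therefore the event ``Daria loses'' is contained in $\{\sum_i(1-X_i)\le N-1\}\cap\{\sum_i(1-q_i)>3N\}$: the realized number of Daria-wins is below roughly a third of its conditional expectation. I would bound this with the exponential supermartingale $\exp\!\big(-\theta\sum_{i\le k}(1-X_i)+(1-e^{-\theta})\sum_{i\le k}(1-q_i)\big)$, which has expectation $\le 1$; on the event in question it is at least $\exp\!\big(N[3(1-e^{-\theta})-\theta]\big)$, and choosing $\theta\approx\ln 3$ makes the bracket a positive constant ($\approx 0.9$), so the event has probability $2^{-\Theta(N)}$. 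This proves Daria wins with probability $\ge 1-2^{-\Theta(N)}$.

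The part I expect to be most delicate is making an \emph{honest} large-deviation bound produce $2^{-\Theta(N)}$: a naive route — bounding $\mathbb E[\#\text{blocked rounds}]\le \tfrac{AN}{mD}$ and applying Markov, or applying Azuma--Hoeffding to $\sum_i X_i$ directly — fails, because the slack between the mean and the threshold $T-N+1$ is only $\Theta(N)$ while $T$ may be far larger, so additive concentration gives only $2^{-\Theta(N^2/T)}$. Two ingredients rescue this: (i) upgrading $\mathbb E[\sum_{i:X_i=1}q_i]\le\tfrac{AN}{mD}$ to an \emph{almost-sure} inequality via the exact per-round payment identity (this is why the precise constant $m$, not some $c_m m$, appears), and (ii) phrasing the failure event multiplicatively — below a constant fraction of the conditional mean of Daria's win count — so that the Chernoff bound for supermartingales applies. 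The only other technical point, verifying $q_i\le c_iN/D$ across the regime boundaries of the builders' equilibrium, is a short computation given the equilibrium characterization already established.
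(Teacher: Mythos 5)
Your proposal is correct, and its skeleton matches the paper's: Daria bids $B_i=D/N$ in every round (so she can never be bankrupted before her $N$-th win), the per-round estimate $q_i\le c_iN/D$ is exactly the paper's Lemma~\ref{lem:P_D_i_bound} in disguise (the paper proves $P^D_i\ge 1-(c_i/B)^{m/(m-1)}\ge \epsilon_i$ with $c_i=(1-\epsilon_i)B$, which is the same inequality $q_i\le c_i/B$; your reduction to $(1-p)^m(1+mp)\le 1$ versus the paper's $mp(1-p)^{m-1}+(1-p)^m\le 1$ are equivalent elementary facts), and Alice's budget enters through the identity that she pays $mc_i$ in every round she blocks. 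Where you genuinely diverge is the endgame. The paper splits into two cases according to whether $\sum_{i\in S}\epsilon_i\ge 3N$: in the first it applies a multiplicative Chernoff bound to $X=\sum_{i\in S}X_i$ treating the $P^D_i$ as fixed, and in the second it derives a deterministic contradiction with Alice's budget. You instead fold both cases into one pathwise statement — the almost-sure inequality $\sum_{i:X_i=1}q_i\le AN/(mD)$ — and then show that the loss event forces $\sum_i(1-q_i)>3N$ while $\sum_i(1-X_i)\le N-1$, which you kill with an exponential supermartingale. This buys two things: it avoids the case split, and it handles Alice's adaptivity (her $c_i$, and hence the case the paper conditions on, depend on realized randomness) cleanly, whereas the paper's direct use of a Chernoff bound for a fixed-mean sum is slightly informal on that point; the price is the mildly heavier martingale machinery. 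One cosmetic point: the game may terminate before round $T$ once Alice reaches $T-N+1$ wins, so your sums $\sum_{i=1}^T$ should be read over the rounds actually played (or padded with fictitious rounds where $q_i=0$); the counting still gives $\sum_i(1-q_i)>3N$ on the loss event, so nothing breaks.
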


\begin{proof}
Let us assume that $A<\frac{(T-4N+1)m}{N}D$, and
consider the following strategy by Daria:  in each round $i$, she sets $B_i = B = \frac{D}{N}$. Note that with this strategy Daria can never lose because she ran out of budget, since she pays at most $\frac{D}{N}$ when she wins a round
and she needs to win $N$ rounds in order to win the entire game.

To counteract this strategy, in each round Alice either sets a high value for $c_i$, which gives a low probability for Daria to win the round but costs Alice a lot. Or, Alice sets a low value for $c_i$, which costs little but implies a high probability for Daria to win. 

{
Recall that we denote the probability that Daria wins round $i$ by $P^D_i$.
We denote by $S = \{i \mid c_i < B\}$ the set of rounds where Daria has a positive probability of winning (see the beginning of section \ref{sec:budget_balanced_analysis}).
\begin{lemma}
\label{lem:P_D_i_bound}
    If $c_i < B$, then $ 1-(\frac{c_i}{B})^{\frac{m}{m-1}} \leq P^D_i $.
\end{lemma}

\begin{proof}
    If $m\cdot c_i \leq B$, then, as observed at the beginning of section \ref{sec:budget_balanced_analysis}, $P^D_i=1$ and we are done.  We therefore assume that we are in the regime $c_i < B < m\cdot c_i$ in which there is only a single symmetric totally-mixed Nash equilibrium describing the builders' behavior.
    
    Let $p_i$ be the corresponding probability that a builder includes Daria's transaction.  Then we need to show that $1-\left(\frac{c_i}{B}\right)^{\frac{m}{m-1}} \leq 1-(1-p_i)^m$,
    which is equivalent to $(1-p_i)^{m-1}\leq \frac{c_i}{B}$. 
    
    From~\eqref{indifference_balanced_simple}, we have that $\frac{c_i}{B}=\frac{1-(1-p_i)^m}{mp_i}.$ Therefore, by plugging in, we need to show 
    $
    (1-p_i)^{m-1} \leq \frac{1-(1-p_i)^m}{mp_i},
    $
    which is equivalent to 
    \[
    mp_i(1-p_i)^{m-1}+(1-p_i)^m \leq 1.
    \]
    This indeed holds since the left-hand side is 
    the probability that either a single builder includes Daria's transaction or no builder does.
\end{proof}
}
Let $X_i$ be a Bernoulli random variable, that is equal to $1$ if Daria wins round $i$ and $0$ if Daria loses it. Then, $\Pr[X_i=1]=P^D_i$. 
{
Let $X:=\sum_{i\in S} X_i$ and $\mu:=\mathbb{E}(X)=\sum_{i\in S} P^D_i$.


For $i \in S$,
}
Let $\epsilon_i$ be such that $c_i = (1-\epsilon_i)B$. 
Then, 
\[
(1-p_i)^m\leq \left(\frac{c_i}{B}\right)^{\frac{m}{m-1}} = (1-\epsilon_i)^{\frac{m}{m-1}}\leq 1-\epsilon_i,
\]
where the first inequality holds by Lemma \ref{lem:P_D_i_bound} and
the last inequality holds since $1-\epsilon_i < 1$. Thus, the probability Daria wins, $P_i^D$ is at least $\epsilon_i$.

Consider two cases: First, 
{
$\sum_
{i\in S}
\epsilon_i \geq 3N$,
}
which implies that $\mu\geq 3N$. 
We apply a Chernoff type bound:

\begin{equation}\label{chernoff}
\Pr[X\leq (1-\delta)\mu]\leq e^{-\frac{\mu \delta^2}{2}},   
\end{equation}

which holds for any $0<\delta<1$. Then, the probability that Daria loses is at most $\Pr[X<N]$ which is equal to $\Pr[X < (1-\frac{2}{3})3N]$ by taking $\delta=\frac{2}{3}$, and is at most $e^{-\frac{3N\cdot 4/9}{2}}=e^{-\frac{2N}{3}}$, from inequality~\eqref{chernoff}.

In the second case, we have
{
$\sum_
{i \in S}
\epsilon_i<3N$. 
}
We claim that Daria wins the entire game with certainty in this case.  To this end, let us assume towards contradiction that Alice wins the entire game for some instantiation of the builders' behavior, 
and denote $W = \{i \mid \textrm{Alice won round $i$}\}$.  Then, by definition, we have $\left|W\right| \geq T-N+1$.
Note that
Alice pays $mc_i$ if she wins any round $i$. Therefore, 
{
if Alice wins the entire game,
then she spends at least 
\begin{align*}
\sum_{i \in W}mc_i &= \sum_{i \notin S}mc_i + \sum_{i \in S \cap W} mc_i \\
 &\geq \sum_{i \notin S}mB + \sum_{i \in S \cap W} m(1-\epsilon_i)B \\
                                            &= \sum_{i \notin S}mB + \sum_{i \in S \cap W} mB - mB \sum_{i \in S \cap W} \epsilon_i \\
                                            &\geq \sum_{i \in W } mB - mB \sum_{i \in S} \epsilon_i \\
                                            &> mB(T-N+1) - mB \cdot 3N \\
                                            &= mB(T-4N+1),
\end{align*}
where the first equality holds since for any $i\notin S$, Alice wins round $i$ with probability 1.
This calculation
gives a lower bound on her budget $A$. 
}
Plugging in $B=\frac{D}{N}$ in the last expression above gives us
$A>\frac{(T-4N+1)m}{N}D,$
contradicting the main assumption of the theorem.
This concludes the proof.
\end{proof}

\begin{remark}
    The proofs of Theorems \ref{condition_alice_k} and \ref{condition_daria_k} specify winning strategies for the corresponding player, which have two appealing properties.  First, they do not depend on the other player's actions.  Second, they are very simple to calculate.
\end{remark}  

\subsection{{Analyis of the Conditional Payment Mechanism}}
Next, we consider the conditional payment mechanism, and we again start by analyzing the builders' behavior in some round $i$.
Similarly to the previous section,
if $B_i \leq c_i$, then there is a single symmetric equilibrium where all builders do not include Daria's transaction, i.e. Alice wins with probability 1.
If $c_i \leq b_i$, then there is a single symmetric equilibrium where all builders include Daria's transaction, i.e. Daria wins with probability 1.
If $b_i < c_i < B_i$, then there is no pure symmetric equilibrium.
let $p_i$ denote the probability that an individual block builder includes Daria's transaction. The symmetric totally-mixed Nash equilibrium indifference condition for a builder is:   

\begin{equation}
    (1-p_i)^{m-1}B_i + (1-(1-p_i)^{m-1})b_i = c_i,
\end{equation}

implying the solution $$1-p_i=\left(\frac{c_i-b_i}{B_i-b_i}\right)^{\frac{1}{m-1}}.$$

The probability that Alice wins a round is equal to the probability that no block builder includes Daria's transaction: $$P_i^A:=(1-p_i)^m=\left(\frac{c_i-b_i}{B_i-b_i}\right)^{\frac{m}{m-1}},$$ 
and Alice's expected expenditure is $E_i^A=(1-p_i)mc_i.$ The probability that Daria wins a round is $P_i^D:=1-P_i^A$ and Daria's expected expenditure is 
$$E_i^D:=mp_i(1-p_i)^{m-1}B+\sum_{t=2}^{m}{m \choose t}p_i^t(1-p_i)^{m-t}tb_i.$$

We can obtain the same result in this setting as in Theorem~\ref{condition_daria_k}.
\begin{proposition}\label{condition_daria_k_assymetric}
If $A<\frac{D(T-4N+1)m}{N}$, then Daria wins with probability at least $1-\frac{1}{2^{\Theta(n)}}$. 
\end{proposition}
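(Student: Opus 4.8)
The plan is to reprise the proof of Theorem~\ref{condition_daria_k} essentially verbatim, since the conditional payment mechanism furnishes the closed form $P_i^A = \left(\frac{c_i-b_i}{B_i-b_i}\right)^{m/(m-1)}$, which plays exactly the role that $(c_i/B)^{m/(m-1)}$ played in the budget-balanced case. First I would have Daria commit, in every round, to the fixed bids $B_i = B := D/N$ together with a negligibly small $b_i$ satisfying $0 \le b_i \le B/m$ (taking $b_i = 0$ is also fine and only simplifies the formulas). Whenever Daria wins a round she pays either $B_i = B$ (if exactly one builder includes her transaction) or $t b_i \le m b_i \le B$ (if $t \ge 2$ builders do), so she pays at most $B$ per win; since she needs only $N$ wins, this strategy never exhausts her budget --- exactly as in Theorem~\ref{condition_daria_k}.

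Second, I would prove the analog of Lemma~\ref{lem:P_D_i_bound}: for a round $i$ in $S := \{i \mid c_i < B\}$ --- the only rounds where Daria has any chance of winning --- writing $c_i = (1-\epsilon_i)B$, we have $P_i^D \ge \epsilon_i$. This follows because $c_i \le B$ forces $\frac{c_i - b_i}{B - b_i} \le \frac{c_i}{B}$ (cross-multiplying reduces this to $c_i b_i \le B b_i$), whence $P_i^A \le (c_i/B)^{m/(m-1)} = (1-\epsilon_i)^{m/(m-1)} \le 1 - \epsilon_i$, the last inequality holding because the exponent exceeds $1$ and the base lies in $[0,1]$; therefore $P_i^D = 1 - P_i^A \ge \epsilon_i$.

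Third, the remainder is identical to the proof of Theorem~\ref{condition_daria_k}. Let $X_i$ indicate that Daria wins round $i$, let $X = \sum_{i\in S} X_i$, and $\mu = \mathbb{E}[X] = \sum_{i\in S} P_i^D \ge \sum_{i\in S}\epsilon_i$. If $\sum_{i\in S}\epsilon_i \ge 3N$, then $\mu \ge 3N$ and the Chernoff bound~\eqref{chernoff} with $\delta = 2/3$ yields $\Pr[X < N] \le e^{-2N/3}$, so Daria wins with probability $1 - 2^{-\Theta(N)}$. If instead $\sum_{i\in S}\epsilon_i < 3N$, then Daria wins with certainty: were Alice to win the whole game she would win a set $W$ with $|W| \ge T-N+1$, paying $m c_i$ in each such round (all $m$ builders are non-includers when Alice wins), so her total outlay is at least $\sum_{i \notin S} mB + \sum_{i \in S \cap W} m(1-\epsilon_i)B \ge mB(T-N+1) - mB\sum_{i\in S}\epsilon_i > mB(T-4N+1) = \frac{D(T-4N+1)m}{N}$, contradicting the hypothesis $A < \frac{D(T-4N+1)m}{N}$.

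The only genuinely new point, and the one I would check most carefully, is the passage from the budget-balanced equilibrium analysis to the conditional-payment one: that the bound $P_i^A \le (c_i/B)^{m/(m-1)}$ holds throughout the relevant mixed regime $b_i < c_i < B_i$ (the monotonicity $\frac{c_i-b_i}{B-b_i} \le \frac{c_i}{B}$ does this), and that the degenerate boundary cases $c_i \le b_i$ and $c_i \ge B_i$ get absorbed, respectively, into ``Daria wins round $i$ with probability $1$'' and into the lower-bound bookkeeping on Alice's spend, just as before. Everything else is routine transcription.
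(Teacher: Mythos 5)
Your proposal is correct and follows essentially the same route as the paper: the paper's proof sets $b_i=0$ and $B_i=D/N$, observes that then $P_i^A=(c_i/B)^{m/(m-1)}$ exactly, and defers to the proof of Theorem~\ref{condition_daria_k} from the point after Lemma~\ref{lem:P_D_i_bound}. Your slight generalization to $0\le b_i\le B/m$, with the monotonicity check $\frac{c_i-b_i}{B-b_i}\le\frac{c_i}{B}$ and the per-win payment bound $tb_i\le B$, is sound but unnecessary for the stated result.
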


\begin{proof}
    {
    Let us assume that $A<\frac{D(T-4N+1)m}{N}$, and consider Daria's strategy which sets $b_i = b =0$ and $B_i = B = \frac{D}{N}$ in each round $i$.
    In every round $i$ for which $c_i < B$ we have $p_i = 1 - \left(\frac{c_i}{B}\right)^{\frac{1}{m-1}}$.
    The proof now proceeds exactly as the proof of Theorem \ref{condition_daria_k}, starting from the point after Lemma \ref{lem:P_D_i_bound}.
    }
\end{proof}

A result similar to Theorem~\ref{condition_alice_k} holds in the conditional payment mechanism case as well, but the proof gets too complicated and less intuitive.

\section{Evaluation}\label{sec:eval}
In this section, we discuss the practical implications of our results from previous sections given real world values for the parameters that define the games $\mathcal{G}^1$, $\mathcal{G}_k^1$, and $\mathcal{G}^m$, and we focus on the case of BoLD \cite{bold}, the new dispute resolution protocol for Arbitrum, which has been deployed on Ethereum mainnet on February 12th, 2025.

The one week challenge period in BoLD corresponds to about $T=50000$ blocks/rounds, since a new block on Ethereum is created every $12$ seconds.
As discussed in the introduction, it is estimated that around $2\%$ of the proposers use default software to decide which transactions get included in their blocks.
In other words, around
$s=1000$ rounds are special.

The block gas limit is $30$M gas units at the time of writing, but there is a strong willingness to raise it to $36M$ \cite{pumpTheGas}. 
A single BoLD transaction by the defender takes up about $0.5$M gas units.
Thus, in order to censor a defender's transaction in a special round,
the adversary would need to submit at least $29.5$M gas units worth of transactions, each of which is specified with a priority fee per gas unit that is higher than the priority fee per gas unit specified by the defender.

Therefore, we estimate the parameter $k$ for the special rounds of Section~\ref{sec:g_1_k_game} to be about $60$, and soon to be increased to about $72$.
The maximum number of transactions that an honest party needs to post in a BoLD challenge is around $n=60$. 
Let us assume that the attacker can steal all bridged assets from the optimistic rollup, if she manages to confirm the wrong claim about the rollup state. In the case of Arbitrum, that would amount to around $\$10$B worth of assets, which gives an estimate for $\A$.
To guarantee security against censorship, 
Theorem~\ref{thm:one_builder} suggests that the defender's budget $D$ should be at least $\$12$M, in the absence of special rounds.

For the special round variant, the value $\A(50000, 60, 1000)$ is approximately equal to $1786$. Therefore, Theorem~\ref{thm:optimal_specials} implies that $D$ should be at least $\$5.6$M in order to be secure against censorship.
 
For these $t,n,s$,
the lower bound from Theorem~\ref{optimal_vs_trivials}, $TS(t,n,s)$ for $s\geq n-1$ approximates $\A(t,n,s)$ by $98.5\%$.  i.e. $\frac{TS}{\A}\approx 0.985$.
However, for some parameter instantiations, this bound is not a good approximation. For example, when $s$ and $n$ are close, $\frac{TS}{\A}$ can reach $35.1\%$.  An example of this is $t=214$, $s=n=57$ and $k=25$. Then, $\A(t,n,s)=9.1$ whereas $TS(t,n,s)=3.19$. The upper bound derived from Theorem~\ref{optimal_vs_trivials} is equal to $26.8$. This means that none of the bounds come close to the optimal value, $\A$. The reason for this is that the value $kn$ is significant compared to $t$ and $s$. 

With $m$ block builders, the required value for $D$ is simply lowered by a factor of $m$. A reasonable value for multiple proposers or FOCIL committee members is around $20$, hence the required defender's budget is in the order of sub million.

\section{Conclusion}
\label{sec:conclusion}
This paper studies economic censorship on blockchains, in scenarios that require multiple sequential transactions to be posted by the honest party, a prominent example being fraud proof systems for optimistic rollups. In three different settings, we derive bounds on the ratio between the budgets of the attacking and defending parties under which one of them can guarantee victory in the censorship game.

All the analysis in the paper is based on priority fees which resemble auction bids. However, Ethereum also uses base fees to charge for transactions, in order to handle congestion. Since the base fee grows exponentially when consecutive blocks are filled, it is not a sustainable strategy for the attacker to outbid the defender for too long in consecutive special rounds.  However,
some mixture of priority fee outbidding and base fee manipulation might prove to be a useful tactic for the attacker.
This is unclear and we suggest to incorporate base fee dynamics into the model for future work.

Another potential future direction is to model a setting in which the attacking party can initiate multiple challenge games in parallel, each of them at a (high) cost. Arbitrum BoLD is an example of a challenge protocol that allows this. Doing this increases the number of transactions the defending party might need to post sequentially, but this comes at a cost for the attacker, who would need to place a large deposit for each game.

\paragraph{Acknowledgement}
We would like to thank Mario M. Alvarez, Julian Ma, Max Resnick and Terence Tsao for helpful discussions and feedback.

\bibliographystyle{plain}
\bibliography{bibliography}

\end{document}